\newcommand{\LONGVERSION}[1]{}
\newcommand{\SHORTVERSION}[1]{#1}
\newtheorem{corollary}[theorem]{Corollary}
\def\squareforqed{\ensuremath{\Box}}
\def\qed{\ifmmode\squareforqed\else{\unskip\nobreak\hfil
\penalty50\hskip1em\null\nobreak\hfil\squareforqed
\parfillskip=0pt\finalhyphendemerits=0\endgraf}\fi}
\newenvironment{proof}[1][]{\noindent\ifthenelse{\equal{#1}{}}{{\it
      Proof.}}{{\it Proof #1.}}\hspace{2ex}}{\qed\bigskip}
\newenvironment{proof*}[1][]{\noindent\ifthenelse{\equal{#1}{}}{{\it
      Proof.}}{{\it Proof #1.}}\hspace{2ex}}{\bigskip}
\newdimen\zzlistingsize
\newdimen\zzlistingsizedefault
\global\def\InsideComment{0}
\newcommand{\Lstbasicstyle}{\fontsize{\zzlistingsize}{1.05\zzlistingsize}\ttfamily}
\newcommand{\keywordFmt}{\fontsize{0.9\zzlistingsize}{1.0\zzlistingsize}\bf}
\newcommand{\smartkeywordFmt}{\if0\InsideComment\keywordFmt\fi}
\newcommand{\commentFmt}{\def\InsideComment{1}\fontsize{0.95\zzlistingsize}{1.0\zzlistingsize}\rmfamily\slshape}
\newlength{\zzlstwidth}
\newcommand{\setlistingsize}[1]{\zzlistingsize=#1%
\settowidth{\zzlstwidth}{{\Lstbasicstyle~}}}
\newcommand{\bla}{\ensuremath{\mbox{$$}}} 
\newcommand{\der}{\,\vdash}
\newcommand{\length}[1]{|#1|}
\def\lv{\mathopen{{[\kern-0.14em[}}}    
\def\rv{\mathclose{{]\kern-0.14em]}}}   
\newcommand{\dens}[1]{\mathopen{[\kern-0.3ex[}#1\mathclose{]\kern-0.3ex]}}
\newcommand{\denk}[2]{\mathopen{\{\kern-0.3ex|}#1\mathclose{|\kern-0.3ex\}}_{#2}}
\def\lo{\mathopen{{\lceil\kern-0.25em\lceil}}}    
\def\ro{\mathclose{{\rfloor\kern-0.25em\rfloor}}}
\def\ltox#1{\buildrel\raise1pt\hbox{$\scriptstyle#1$}\over\longrightarrow}
\def\tocolow{\buildrel\raise-5pt\hbox{$\scriptscriptstyle+$}\over\rightarrow}
\newcommand{\abbrev}[1]{#1} 
\newcommand{\ie}{\abbrev{i.\,e.}}
\newcommand{\para}[1]{\paragraph*{\it#1}}
\newcommand{\paradot}[1]{\para{#1.}}
\newcommand{\mgoal}[1][]{\mbox{goal\ifthenelse{\equal{#1}{}}{}{~#1}}}
\newcommand{\ru}[2]{\dfrac{\begin{array}[b]{@{}c@{}} #1 \end{array}}{#2}}
\newcommand{\cxt}{\ctx}  
\newcommand{\mcxt}{\mctx} 
\newcommand{\shiftby}[1]{\shift^{#1}}
\newcommand{\Shiftby}[1]{\Shift^{#1}}
\newcommand{\mshiftby}{\Shiftby}
\newcommand{\mId}{\Shiftby 0} 
\newcommand{\esub}[1]{[#1]}
\newcommand{\esubp}[2]{[#1](#2)}
\newcommand{\msubp}[2]{\msub{#1}(#2)}
\newcommand{\sgsub}[1]{\esub{\shiftby 0,#1}}
\newcommand{\dsub}[3]{\esub{#1}{\msub{#2}{#3}}} 
\newcommand{\dsubp}[3]{\esub{#1}{\msubp{#2}{#3}}} 
\newcommand{\fun}[1]{\Pi\,#1.\,}
\newcommand{\abox}[1]{\multicolumn 1 {@{}p{\textwidth}@{}} {#1}}
\newcommand{\aboxiii}[1]{\multicolumn 3 {@{}p{\textwidth}@{}} {#1}}
\newcommand{\aleq}{\buildrel\mathsf{l}\over\sim}
\newcommand{\alr}{\buildrel\mathsf{r}\over\sim}
\newcommand{\alreq}[4]{\shiftEnv{#1}{#3} \alr \shiftEnv{#2}{#4}}
\newcommand{\alw}{\buildrel\mathsf{w}\over\sim}
\newcommand{\aln}{\buildrel\mathsf{n}\over\sim}
\newcommand{\als}{\buildrel\mathsf{s}\over\sim}
\newcommand{\lift}[1]{\esub{\shiftby 1}{#1},x_1}
\newcommand{\tlookup}{\mathsf{lookup}}
\newcommand{\wlookup}[2]{\tlookup~#1~{x_{#2}}}
\newcommand{\wlookupp}[2]{\wlookup{(#1)}{#2}}
\newcommand{\tLookup}{\mathsf{Lookup}}
\newcommand{\wmlookup}[2]{\tLookup~#1~{X_{#2}}}
\newcommand{\wmlookupp}[2]{\wmlookup{(#1)}{#2}}
\newcommand{\twhnf}{\mathsf{whnf}}
\newcommand{\twhnfp}[1]{\twhnf~(#1)}
\newcommand{\whnf}[3]{\twhnf~\dsub{#1}{#2}{#3}}
\newcommand{\whnfp}[3]{\whnf{#1}{#2}{(#3)}}
\newcommand{\twsub}{\mathsf{env}}
\newcommand{\wsub}[3]{\twsub~#1~#2~#3}
\newcommand{\wsubp}[3]{\wsub{#1}{#2}{(#3)}}
\newcommand{\twmsub}{\mathsf{Env}}
\newcommand{\wmsub}[2]{\twmsub~#1~#2}
\newcommand{\shiftEnv}[1]{\esub{\shiftby{#1}}}
\newcommand{\shiftEnvp}[2]{\esub{\shiftby{#1}}{(#2)}}
\newcommand{\tshift}{\mathsf{shift}}
\newcommand{\shiftClos}[1]{\tshift^{#1}\,}
\newcommand{\shiftClosp}[2]{\shiftClos{#1}{(#2)}}
\newcommand{\shiftNe}{\shiftClos}
\newcommand{\shiftNep}{\shiftClosp}
\newcommand{\wapp}[2]{#1 \mathrel{@} #2}
\newcommand{\sid}{\shiftby 0}
\newcommand{\jinf}{\rightrightarrows}
\newcommand{\jchk}{\leftleftarrows}
\newcommand{\cempty}{\mathord{\cdot}}
\newcommand{\EL}{\mathcal{E}\kern-0.2ex\ell}
\newcommand{\x}{\mathsf{x}}
\newcommand{\xdel}[1][]{\ifthenelse{\equal{#1}{}}{\x_\Delta}{\x_{\Delta+#1}}}
\newcommand{\VDash}{\mathrel{\mathord{|}\kern-0.15ex\mathord{\models}}}
\title{Explicit Substitutions for Contextual Type Theory}
\author{Andreas Abel
\institute{Theoretical Computer Science, Ludwig-Maximilians-University
  Munich, Germany}
\email{andreas.abel@ifi.lmu.de}
\and
  Brigitte Pientka
\institute{School of Computer Science, McGill University, Montreal, Canada}
\email{bpientka@cs.mcgill.ca}
}
\begin{document}
\maketitle

\begin{abstract}
In this paper, we present an explicit substitution calculus
which distinguishes between ordinary bound variables and
meta-variables. Its typing discipline is derived from contextual modal
type theory. We first present a dependently typed lambda calculus with
explicit substitutions for  ordinary variables and explicit
meta-substitutions for meta-variables.  We then present a weak head
normalization procedure which performs both substitutions lazily and
in a single pass thereby combining substitution walks for the two
different classes of variables. Finally, we describe a bidirectional
type checking algorithm which uses weak head normalization and prove
soundness. 
\\
Keywords: 
Explicit substitutions,
Meta-variables,
Logical framework,
Contextual modal type theory
\end{abstract}

\section{Introduction}
Over the last decade, reasoning and programming with dependent types 
has received wide attention and several systems provide implementations
for dependently typed languages (see for example Agda \cite{boveDybjerNorell:tphols09,norell:PhD}, Beluga \cite{Pientka:PPDP08,pientkaDunfield:ijcar10}, Delphin \cite{Schuermann:ESOP08,poswolskySchuermann:delphin}, Twelf \cite{Pfenning99cade}, etc). 

As dependent types become more accepted, it is interesting to better
understand how to implement such systems efficiently.
While all the systems mentioned support type checking and moreover provide implementations 
supporting type reconstruction for dependent types, there is a
surprising lack in documentation and gap in modelling the theoretical foundations of these implementations. This makes it hard to reproduce some
of the ideas, and prevents them from being widely accessible to a
broader audience.

A core question in the implementations for dependently typed systems
is how to handle substitutions. Let us illustrate the problem in the
setting of contextual modal type theory \cite{Nanevski:ICML05}, where
we not only have ordinary $\Pi$-types to abstract over ordinary
variables $x$ but also $\Pibox$-types which allow us to abstract over
meta-variables $X$, and we find the following two elimination rules:

\[
\begin{array}{l}
\infer{\Delta ; \Gamma \vdash M\;N : [N/x]B}{
\Delta ; \Gamma \vdash M : \Pi x{:}A.B  &
\Delta ; \Gamma \vdash N : A}
\quad \quad  
\infer{\Delta ; \Gamma \vdash M\;(\Psihat.N) : \msub{\Psihat.N/X}B}{
\Delta ; \Gamma \vdash M : \Pibox X{:}A[\Psi].B  &
\Delta ; \Psi \vdash N : A}
\end{array}
\]

In the $\Pi$-elimination rule, we do not want to apply
the substitution $N$ for $x$ in the type $B$ eagerly during type
checking, but accumulate all the individual substitutions and apply
them simultaneously, if necessary. Similarly, in the
$\Pibox$-elimination rule, we do not want to replace eagerly the
meta-variable $X$ with $N$ in the type $B$ but accumulate all
meta-substitutions and also apply them simultaneously. 
In fact, we would like to combine substitution walks for
meta-variables and ordinary variables, and simultaneously apply
ordinary substitution and meta-substitutions to avoid multiple
traversals. This will allow us potentially to detect that two terms
are not equal without actually performing a substitution, and in the
case of a de Bruijn numbering scheme for variables, we would like to
avoid unnecessary renumbering.  

Explicit substitutions go back to Abadi et al
\cite{abadiCardelliCurienLevy:jfp91} and
are often central when implementing core algorithms such as type
checking or  higher-order unification 
\cite{dowekHardinKirchner:infcomp00}. 
Many existing implementations of proof assistants such as the Twelf
system, Delphin, Beluga, Agda or $\lambda$Prolog use explicit
substitutions to combine substitution walks for ordinary
variables. A different approach with the same goal of handling
substitutions efficiently is the suspension calculus 
\cite{Nadathur:TCS98,liangNadathurQi:jar05}.

 However, meta-variables are often modeled via references
thereby avoiding the need to explicitly propagate substitutions for meta-variables. Yet
there are multiple reasons why we would like to treat meta-variables
non-destructively and be able to handle meta-substitutions
explicitly. First, such implementations may be easier to
maintain and may be more efficient. Second, in several applications
we need to abstract over the remaining free meta-variables in the most
general solution found by higher-order unification.  For example in type
reconstruction we need to store a closed most general type or  in tabled
higher-order logic programming \cite{Pientka03phd} we want to
store explicitly the answer substitution for the meta-variables occurring in a
query. Abstraction can be expensive since we need to first
traverse a term including the types of all the meta-variables
occurring in it and  collect all references to
meta-variables. Subsequently, we again need to traverse the term
including the types of meta-variables and compute their appropriate de
Bruijn index.  A non-destructive implementation of unification could
avoid this explicit abstraction step. To achieve a practical,
non-destructive implementation of unification, understanding the interaction of
ordinary substitutions with meta-substitutions and handling both
lazily is crucial.

While meta-variables are often only introduced internally, i.e., there
is no abstraction over meta-variables using a $\Pibox$-type, some
languages such as Beluga have taken the step to distinguish ordinary
bound variables and meta-variables already in the source
language. Consequently, we find different classes of bound variables,
bound ordinary variables and bound meta-variables, and different
types, $\Pi$- and $\Pibox$-types. When 
type-checking Beluga programs, we would like to combine substitution
walks for these different classes. Understanding how these two
substitutions interact is also crucial for type reconstruction in this
setting, since omitted arguments may depend on both kinds of variables.

In this paper, we revisit the ideas of explicit substitutions where we
combine substitutions for ordinary variables and meta-variables. In
particular, we describe an explicit substitution calculus with
first-class meta-variables inspired by contextual modal type theory
\cite{Nanevski:ICML05}. We first present a dependently typed lambda
calculus with explicit substitutions for  ordinary variables and
explicit meta-substitutions for meta-variables. We omit here the
ability to abstract explicitly over meta-variables which is a
straightforward addition and concentrate on the interaction of
ordinary substitutions and meta-substitutions. We then present a weak
head normalization procedure which performs both substitutions lazily
and in a single pass thereby combining substitution walks for 
the two different classes of variables. Finally, we give an algorithm
for definitional equality and present a bidirectional type checking
algorithm which employs weak head normalization and show soundness. 
In the future, we plan to use the presented calculus as a
foundation for implementing the Beluga language which supports
programming and reasoning with formal systems specified in the logical
framework LF.

\section{The Calculus: Syntax, Typing, and Equality}

Let us first introduce the grammar and typing rules for the
dependently typed $\lambda$-calculus  with meta-variables based on
the ideas in \cite{Nanevski:ICML05}. The system we consider
is an extension of the logical framework LF with first-class
meta-variables. We design the calculus 
as an extension of previous explicit substitution calculi such as
\cite{abadiCardelliCurienLevy:jfp91,dowekHardinKirchner:infcomp00}. These calculi only support ordinary
substitutions but not at the same time meta-substitutions. 

Our calculus supports general closures on the type and term
level. Meta-variables (which sometimes are also called contextual
variables) are written as $X$. 
Typically, meta-variables occur as a 
closure $\clo{X}{\sigma}$, but we will treat this as a special case of
the general closure $\clo{N}{\sigma}$. 

To provide a compact representation of the typing rules, we follow the
tradition of pure type systems and introduce sorts and expressions
where sorts can be either $\lfkind$ or $\lftype$ and expressions
include terms, types and kinds. A single syntactic category of
expressions helps us avoid duplication in the typing and equality
rules for closures $\esub \sigma E$ and 
$\msub \theta E$. We will write $M$, $A$, $K$, if indeed expressions
can only occur as terms $M$, types $A$ or kinds $K$.

\[ 
\begin{array}{@{}llcl@{}r@{}}
\mbox{Sorts} & s & \bnfas & \lfkind \mid \lftype \\
\mbox{Expressions} & E,F & \bnfas & \multicolumn 2 {l@{}} {
s \mid a \mid \fun E F
 \mid x_n \bnfalt X_n \bnfalt \lam{x}{E}
\bnfalt \app{F}{E} \bnfalt \clo{E}{\sigma} \bnfalt \cclo{E}{\theta} 
} \\[0.5em]
\hline\\[-0.75em]
\multicolumn{4}{@{}l@{}}{\mbox{Special cases of expressions:}} \\[0.5em]
\mbox{Kinds} & K & \bnfas & \lftype \bnfalt \Pi A.K \bnfalt
\clo{K}{\sigma} \bnfalt \cclo{K}{\theta} \\
\mbox{Types} & A,B & \bnfas & a \mid A\;M \bnfalt \Pi A.B
\bnfalt \clo{A}{\sigma}  \bnfalt \cclo{A}{\theta} \\ 
\mbox{Terms} & M,N & \bnfas &  x_n \bnfalt X_n \bnfalt \lam{x}{M}
\bnfalt \app{M}{N} \bnfalt \clo{N}{\sigma} \bnfalt \cclo{M}{\theta} 
& (n \geq 1)
\\[0.5em]
\hline \\[-0.75em]
\mbox{Substitutions} & \sigma, \tau & \bnfas & \shift^n \bnfalt 
  \sigma, M  \bnfalt \clo{\sigma}{\tau}  \bnfalt \cclo{\sigma}{\theta}
& (n \geq 0)
\\[0.5em]
\mbox{Meta-substitutions} 
& \theta & \bnfas & \Shift^n \bnfalt
\theta,M \bnfalt \cclo{\theta'}{\theta}
& (n \geq 0)
\\[0.5em]
\mbox{Contexts} & \Gamma,\Psi & \bnfas & \edot \bnfalt \Psi, A  
\\
\mbox{Meta-contexts} & \Delta & \bnfas &  \edot \bnfalt \Delta,
\cdec{\Psi}{A} 

\end{array} 
\]
Constants are denoted by letter $a$, their types/kinds are recorded in
a global well-formed signature $\Sigma$.
We have two different de Bruijn indices $x_n$ and $X_n$ ($n \geq 1$), 
one for numbering bound variables and one for numbering meta-variables.  
$x_n$ represents the
de Bruijn number $n$ and stands for an ordinary bound variable, while
$X_n$ represents the de Bruijn number $n$ but stands for a
meta-variable. Due to the two kinds of substitutions, we also have two
kinds of closures; the closure of an expression with an ordinary substitution
$\sigma$ and the closure of an expression with a meta-substitution
$\theta$. Following the treatment of meta-variables in
\cite{Nanevski:ICML05}, we describe the type of a meta-variable as
$\cdec{\Psi}{A}$ which stands for a meta-variable of type $A$ which
may refer to variables in $\Psi$. 

Meta-substitutions provide a term $M$ for a meta-variable $X$ of type
$\cdec{\Psi}{A}$. Note that $M$ does not denote a closed term, but a
term of type $A$ in the context $\Psi$ and hence may refer to
variables from $\Psi$. In previous presentations where we use names
for variables, we hence wrote $\Psihat.M/X$ to be able to rename the
variables in $M$ appropriately. Because bound variables
are represented using de Bruijn indices in this paper, we simply write
$M/X$ but keep in mind that $M$ is not necessarily closed.

Our calculus also features closures on the level of substitutions and
meta-substitutions. For example, we allow the closure $[\sigma]\tau$
which will allow us to lazily treat ordinary substitution
composition and the closure $\msub{\theta}\sigma$ which will postpone
applying $\theta$ to the ordinary substitution $\sigma$. Similarly,
the closure $\msub{\theta}\theta'$ for meta-substitutions allows us to
lazily compose meta-substitutions. We note the absence of a closure
$[\sigma]\theta$. Applying an ordinary substitution $\sigma$ to a
meta-substitution $\theta$ simply reduces to $\theta$, since all
objects in the meta-substitution are closed objects and cannot be 
affected by $\sigma$. It is hence not meaningful to include a closure
$[\sigma]\theta$. 
We also do not introduce a closure of a context $\Psi$ and a
meta-substitution $\theta$. Instead we define $\msub \theta \Psi$ eagerly
by simply pushing the meta-substitution $\theta$ to each
declaration as follows: $\msub \theta \cempty = \cempty$ and $\msubp \theta {\Psi,A} = \msub
\theta \Psi,\, \msub \theta A$.  The length of a context $\Gamma$ is
denoted by $|\Gamma|$ and likewise $|\Delta|$ for meta-contexts.

\begin{figure}[htbp]
\centering
\[
\begin{array}{c}
\multicolumn{1}{l}{\mbox{Expressions }}\\
\ru{\Delta \der \Gamma \cxt
  }{\Delta; \Gamma \der \lftype : \lfkind}
\qquad
\ru{\Delta \der \Gamma \cxt \qquad \Sigma(a) = K
  }{\Delta; \Gamma \vdash a : K}
\qquad
\ru{
    \Delta; \Gamma, A \vdash E : s
  }{\Delta;\Gamma \vdash \Pi A.E : s}
\\[1.25em]
\infer{\Delta; \Gamma, A \vdash x_1 : \esub {\shiftby 1} A}
      {\Delta; \Gamma \der A : \lftype}
\qquad
\infer{\Delta; \Gamma, B \vdash x_{n+1} : [\shift^1]A}
      {\Delta; \Gamma \vdash x_n : A & \Delta; \Gamma \der B : \lftype} 
\\[0.75em]
\infer{\Delta, \cdec{\Gamma}{A} ; \msub{\Shiftby 1}\Gamma 
         \vdash X_1 : \msub {\Shiftby 1} A}
      {\Delta; \Gamma \der A : \lftype}
\qquad
\infer{\Delta, \cdec{\Gamma'}{A'} ; \msub{\Shift^1}\Gamma \vdash
  X_{n+1} : \msub{\Shift^1}A}
      {\Delta; \Gamma \vdash X_n : A & \Delta; \Gamma' \der A' : \lftype}
\\[0.75em]
\infer{\Delta; \Gamma \vdash \lam{x}{M} : \Pi A. B}
      {\Delta; \Gamma, A \vdash  M : B \qquad
       \Delta; \Gamma, A \vdash B : \lftype}
\qquad
\infer{\Delta; \Gamma \vdash \app{E}{N} : [\shift^0, N]F}
       {\Delta; \Gamma \vdash E : \Pi A.F
        &
        \Delta; \Gamma \vdash  N : A}
\\[0.75em]
\infer{\Delta; \Gamma \vdash \clo{E}{\sigma} : \lfkind}
      {\Delta; \Gamma \vdash \sigma : \Psi & \Delta; \Psi \vdash E : \lfkind 
      }
\qquad 
\infer{\Delta; \Gamma \vdash \clo{E}{\sigma} : [\sigma]F}
      {\Delta; \Gamma \vdash \sigma : \Psi & \Delta; \Psi \vdash E : F 
      }
\\[0.75em]
\infer{\Delta; \cclo{\Gamma}{\theta} \vdash \cclo{E}{\theta} : \lfkind}
{\Delta \vdash \theta : \Delta' & 
 \Delta' ; \Gamma \vdash E : \lfkind}
\qquad 
\infer{\Delta; \cclo{\Gamma}{\theta} \vdash \cclo{E}{\theta} : \cclo{F}{\theta}}
{\Delta \vdash \theta : \Delta' & 
 \Delta' ; \Gamma \vdash E : F}
\qquad
\infer{\Delta; \Gamma \vdash E : F_2}{
\Delta; \Gamma \vdash E : F_1 & \Delta; \Gamma \vdash F_1 \equiv F_2 : s
}
\\[0.5em]
\multicolumn{1}{l}{\mbox{Contexts and meta-contexts}}\\
\infer{\vdash \cdot \mctx } {}
\qquad
\infer{\vdash \Delta, \cdec{\Psi}{A} \mctx}{
\Delta; \Psi \vdash A : \lftype
}  
\qquad\qquad
\infer{\Delta \vdash \cdot \ctx}{\der \Delta \mcxt}
\qquad
\infer{\Delta \vdash \Psi, A \ctx}{
\Delta; \Psi \vdash A : \lftype
}
\\[0.5em]
\multicolumn{1}{l}{\mbox{Ordinary substitutions}} \\[0.25em]
 \infer[]
       {\Delta; \Psi, \Gamma \vdash \shift^n : \Psi}
       {\Delta \der \Psi,\Gamma \cxt & |\Gamma| = n }
\qquad
 \infer[]
       {\Delta; \Gamma \vdash (\sigma, M) : (\Psi, A)}
       {\Delta; \Gamma \vdash \sigma : \Psi &
        \Delta; \Psi \der A : \lftype &
        \Delta; \Gamma \vdash M : [\sigma]A}
\\[0.75em]
\infer{\Delta; \Gamma \vdash \clo{\sigma}{\tau} : \Psi}
  {\Delta; \Gamma \vdash \tau : \Psi' & 
   \Delta; \Psi' \vdash \sigma : \Psi }
\qquad
\infer{\Delta; \cclo{\Gamma}{\theta} \vdash \cclo{\sigma}{\theta} :  \cclo{\Psi}{\theta}}
{\Delta \vdash \theta : \Delta' & 
  \Delta' ; \Gamma \vdash \sigma : \Psi  
}
\\
\multicolumn{1}{l}{\mbox{Meta-substitutions}} \\[0.25em]
\infer{\Delta, \Delta' \vdash \Shift^n : \Delta}{\der \Delta,\Delta' \mcxt
  & |\Delta'| = n }
\quad\quad
\infer{\Delta \vdash (\theta, M) : \Delta' , \cdec{\Gamma}{A}}
{\Delta \vdash \theta : \Delta' &
  \Delta'; \Gamma \der A : \lftype &
  \Delta; \msub{\theta}\Gamma \vdash M : \msub{\theta}A
}  
\\[0.75em]
\qquad
\infer{\Delta \vdash \cclo{\theta'}{\theta} : \Delta'}{
\Delta   \vdash \theta  : \Delta_0 &
\Delta_0 \vdash \theta' : \Delta' 
}
\end{array}
\]     
 \caption{Typing rules for explicit substitution calculus with first-class meta-variables}
 \label{fig:tprules}
\end{figure}


\subsection{Typing rules }

In contrast to \cite{harperPfenning:equivalenceLF},
we present the typing rules for LF in pure type system (PTS) style, to
avoid rule duplication (which would be substantial for the rules of
definitional equality given in the next section). We will use the
following judgments: 
\[
\begin{array}{lclp{10cm}}
                & \vdash & \Delta \mctx & Meta-context $\Delta$ is well-typed \\
\Delta          & \vdash & \Psi \ctx    & Context $\Psi$ is well-typed
\\
\Delta; \Gamma & \vdash & E : F      & Expression $E$ has ``type'' $F$\\
\Delta; \Gamma & \vdash & \sigma : \Psi & Substitution $\sigma$ has
domain $\Psi$ and range $\Gamma$\\
\Delta  & \vdash & \theta : \Delta' & Contextual Substitution $\theta$ has
domain $\Delta'$ and range $\Delta$
\end{array}
\]
The judgement $\Delta; \Gamma \der E : F$ subsumes the judgements
$\Delta; \Gamma \der M : A$ (term $M$ has type $A$), 
$\Delta; \Gamma \der A : K$ (type family $A$ has kind $K$) and
$\Delta; \Gamma \der K : \lfkind$ (kind $K$ is well-formed).

We present the typing rules in Figure~\ref{fig:tprules} as a type
assignment system for expressions $E$. 
To improve readability, we use the letters $M,N$ instead of $E$ when
we know that we are dealing with a term, and similarly $A,B$ for types and $K$
for kinds.

In the typing rule for $\lambda M$, the hypothesis $\Delta; \Gamma,A
\der B : \lftype$ prevents us to form a $\lambda$-abstraction on the type
level (for this, we would need $B : \lfkind$).  Lambda on the type
level does not increase the expressiveness 
\cite{adams:PhD,harperPfenning:equivalenceLF}.
Unlike the system in \cite{harperPfenning:equivalenceLF}, 
we do not assume that the
meta-context $\Delta$ and the context $\Gamma$ are well-formed, but
ensure that these are well-formed contexts by adding appropriate
typing premises to for example the typing rules for bound variables
and meta-variables. We establish separately that contexts are
well-formed (see Lemma \ref{lem:cwf} on page \pageref{lem:cwf}) and
that the inference rules are valid (see Theorem \ref{thm:synval} on
page \pageref{thm:synval}).

We concentrate here on explaining the typing rules for bound variables
and meta-variables. The typing rules for bound variables essentially
peel off one type declaration in the context $\Gamma$ until we
encounter the variable $x_1$. The typing premises guarantee that
the meta-context $\Delta$ and the context $\Gamma$ and the type $A$ of the
bound variable all are well-typed. The rule for
meta-variables are built in a similar fashion as the typing rules for
bound variables peeling off type declarations from the meta-context
$\Delta$ until we encounter the meta-variable $X_1$.

\subsection{Definitional Equality}

In this section, we describe a typed $\beta\eta$-equality judgement on
expressions, ordinary substitutions, and meta-substitutions.
We will use the following judgments: 
\[
\begin{array}[h]{l@{~}c@{~}l@{~}c@{~}l@{~}c@{~}lp{8cm}}
\Delta; \Gamma & \vdash & E_1 & \equiv & E_2 & : & F & Expressions $E_1$
and $E_2$ are equal at ``type'' $F$\\
 \Delta; \Gamma & \vdash & \sigma_1 & \equiv & \sigma_2 & : & \Psi &
 Substitutions $\sigma_1$ and $\sigma_2$ are equal at
domain $\Psi$ \\
\Delta  & \vdash & \theta_1 & \equiv & \theta_2 & : & \Delta' & 
  Meta-substitutions $\theta_1$ and $\theta_2$ are equal at domain $\Delta'$ 
\end{array}
\]
The judgement $\Delta; \Gamma \vdash E_1 \equiv E_2 : F$
subsumes the judgements $\Delta; \Gamma \vdash K_1 \equiv K_2 : \lfkind $
  (kinds $K_1$ and $K_2$ are equal), 
$\Delta; \Gamma \vdash A_1 \equiv A_2 : K$ (types $A_1$
and $A_2$ are equal of kind $K$) and 
$\Delta; \Gamma \vdash M_1 \equiv M_2 : A$ (terms $M_1$
and $M_2$ are equal of type $A$).

These judgements are all congruences, \ie, we have equivalence rules
(reflexivity, symmetry, transitivity) and a congruence rule for each
syntactic construction.  For instance, this is one of congruence rule for
substitutions and the type conversion rule:
\[
 \infer{\Delta; \Gamma \vd (\sigma, M) \equiv (\sigma', M')
   \hastype \Psi, A}{\Delta; \Gamma \vd M \equiv M' \hastype [\sigma]A
   & \Delta ; \Gamma \vd A \hastype \lftype 
   & \Delta; \Gamma \vd \sigma \equiv \sigma' \hastype \Psi}
\qquad
\infer{\Delta;\,\Gamma \vd E \equiv E' \hastype F'}{
  \Delta; \Gamma \vd E \equiv E' \hastype F & 
  \Delta; \Gamma \vd F \equiv F' \hastype s}
\]
The remaining rules for definitional equality fall into two classes:
the computational laws for ordinary substitutions (Figure
\ref{fig:cl1}) and the computational laws for meta-substitutions
(Figure \ref{fig:cl2}). Both sets of rules follow the same
principle. They are grouped into identity and composition rules,
propagation and reduction rules. For ordinary substitutions we also
include $\beta$-reduction.  For meta-substitutions, there is no
equivalent $\beta$-reduction rule since we do not support abstraction
over meta-variables. However, we add propagation into ordinary
substitutions.  Note that pushing a meta-substitution inside a
lambda-abstraction or a $\Pi$-type does not require a shift of the
indices, since indices of ordinary bound variables are distinct from
indices of meta-variables and no capture can occur.
There is no reduction for $\esub \sigma \msub \theta M$: an
ordinary substitution cannot in general be pushed past a meta-substitution, it
has to wait for the meta-substitution to be resolved.

\begin{figure}
  \centering
\[
 \begin{array}[h]{c}
\abox{$\beta$-Reduction}
\\[0.75em]
\ru{\Delta; \Gamma, A \der M : B \qquad
    \Delta; \Gamma, A \der B : \lftype \qquad
    \Delta; \Gamma \der N : A
  }{\Delta; \Gamma \der (\lambda M)\,N \equiv \sgsub N M : \sgsub N B} 
\\[1.5em]
\abox{Substitution Propagation: Identity and Composition}
\\[0.75em]
\ru{\Delta; \Gamma \der E : F
  }{\Delta; \Gamma \der \esub {\shiftby 0} E \equiv E : F}
\qquad
\ru{\Delta; \Gamma \der \sigma : \Gamma' \qquad
    \Delta; \Gamma' \der \tau : \Psi \qquad
    \Delta; \Psi \der E : F
  }{\Delta; \Gamma \der \esub \sigma {\esub \tau E} \equiv \esub {\esub
      \sigma \tau} E : \esub {\esub \sigma \tau} F
  } 
\\[1.5em]
\abox{Substitution Propagation: Constants}
\\[0.75em]
\ru{\Delta; \Gamma \der \sigma : \Psi \qquad
  }{\Delta; \Gamma \der \esub \sigma \lftype \equiv \lftype : \lfkind} 
\qquad
\ru{\Delta; \Gamma \der \sigma : \Psi \qquad
    \Delta; \Gamma \der a : K \qquad
  }{\Delta; \Gamma \der \esub \sigma a \equiv a : \esub \sigma K} 
\\[1.5em]
\abox{Substitution Propagation: Variable Lookup}
\\[0.75em]
\ru{\Delta; \Gamma \der \sigma : \Psi \qquad
    \Delta; \Psi \der A : \lftype \qquad
    \Delta; \Gamma \der M : \esub \sigma A 
  }{\Delta; \Gamma \der \esub {\sigma, M} x_1 \equiv M : \esub \sigma A}
\qquad
\ru{\Delta; \Gamma \der x_{n+1} : A
  }{\Delta; \Gamma \der x_{n+1} \equiv \esub{\shiftby 1} x_n : A}
\\[1.5em]
\abox{Substitution Propagation: Pushing into Expression Constructions}
\\[0.75em]
\ru{\Delta; \Gamma \der \sigma : \Psi \qquad
    \Delta; \Psi,A \der F : s
  }{\Delta; \Gamma \der \esubp \sigma {\fun A F} \equiv
      \fun {\esub \sigma A} {\esub {\esub {\shiftby 1} \sigma, x_1} F}
      : s
  } 
\\[1.5em]
\ru{\Delta; \Gamma \der \sigma : \Psi \qquad
    \Delta; \Psi,A \der M : B \qquad
    \Delta; \Psi,A \der B : \lftype 
  }{\Delta; \Gamma \der \esubp {\sigma} {\lambda M} \equiv
      \lambda {\esub {\esub {\shiftby 1} \sigma, x_1} M} :
      \fun {\esub \sigma A} {\esub {\esub {\shiftby 1} \sigma, x_1} B}
  } 
\\[1.5em]
\ru{\Delta; \Gamma \der \sigma : \Psi \qquad
    \Delta; \Psi \der E : \fun A F \qquad
    \Delta; \Psi \der N : A
  }{\Delta; \Gamma \der \esubp \sigma {\app E N} \equiv 
      \app {\esub \sigma E} {\esub \sigma N} : \esub {\sigma, \esub
        \sigma N} F
  }
%
\\[1.5em]
\abox{Substitution Reductions: Pairing and Shifting\hspace{12cm}}
\\[0.75em]
\ru{
    \Delta; \Gamma \der (\sigma,M) : \Psi, \Psi', A \quad
    \length{\Psi'} = n
  }{\Delta; \Gamma \der \esub {\sigma, M} {\shiftby {n+1}} \equiv
    \esub \sigma {\shiftby n} : \Psi } 
\qquad
\ru{\Delta; \Gamma \der \sigma : \Psi' \qquad
    \Delta; \Psi' \der (\tau,M) : \Psi, A
  }{\Delta; \Gamma \der \esub \sigma (\tau, M) 
     \equiv (\esub \sigma \tau, \esub \sigma M) : \Psi, A}
\\[1.5em]
\ru{\Delta \der \Gamma,\Gamma_1,\Gamma_2 \cxt \quad
    \length{\Gamma_1} = m \quad 
    \length{\Gamma_2} = n
  }{\Delta; \Gamma,\Gamma_1,\Gamma_2 \der \esub {\shiftby n}
    {\shiftby m} \equiv \shiftby{n+m} : \Gamma 
  }
\\[1.5em]
\abox{Substitution Reductions: Category Laws\hspace{12cm}}
\\[0.75em]
\ru{\Delta; \Gamma \der \sigma : \Psi
  }{\Delta; \Gamma \der \esub {\shiftby 0} \sigma \equiv \sigma : \Psi}
\qquad
\ru{\Delta; \Gamma \der \sigma : \Psi
  }{\Delta; \Gamma \der \esub \sigma {\shiftby 0}  \equiv \sigma : \Psi}
\\[1.5em]
\ru{\Delta; \Gamma_1 \der \sigma_1 : \Gamma_2 \qquad
    \Delta; \Gamma_2 \der \sigma_2 : \Gamma_3 \qquad
    \Delta; \Gamma_3 \der \sigma_3 : \Gamma_4 
  }{\Delta; \Gamma_1 \der \esub {\sigma_1} \esub {\sigma_2}
    {\sigma_3} \equiv \esub {\esub {\sigma_1}{\sigma_2}} {\sigma_3} : \Gamma_4
   }
\end{array}
\]
  
  \caption{Computational Laws I: $\beta$ and substitutions}
  \label{fig:cl1}
\end{figure}

\begin{figure}
  \centering
\[
 \begin{array}[h]{c}
\abox{Meta-Substitution Propagation: Identity and Composition}
\\[0.75em]
\ru{\Delta; \Gamma \der E : F
  }{\Delta; \Gamma \der \msub {\Shiftby 0} E \equiv E : F}
\qquad
\ru{\Delta \der \theta : \Delta' \qquad
    \Delta' \der {\theta'} : \Delta'' \qquad
    \Delta''; \Gamma \der E : F
  }{\Delta; \msub {\msub \theta {\theta'}} \Gamma \der 
      \msub \theta {\msub {\theta'} E} \equiv \msub {\msub
        \theta {\theta'}} E 
      : \msub {\msub \theta {\theta'}} F
  } 
\\[1.5em]
\abox{Meta-Substitution Propagation: Constants and Ordinary Variables}
\\[0.75em]
\ru{\Delta \der \theta : \Delta' \quad 
    \Delta' \der \Gamma \cxt
  }{\Delta; \msub \theta \Gamma \der 
      \msub \theta \lftype \equiv \lftype : \lfkind} 
\quad
\ru{\Delta \der \theta : \Delta' \quad
    \Delta'; \Gamma \der a : K
  }{\Delta; \msub \theta \Gamma \der \msub \theta a \equiv a : \msub \theta K} 
\quad
\ru{\Delta \der \theta : \Delta' \quad
    \Delta'; \Gamma \der x_n : A 
  }{\Delta; \msub \theta \Gamma \der \msub \theta {x_n} \equiv x_n : \msub \theta A} 
\\[1.5em]
\abox{Meta-Substitution Propagation: Meta-variable Lookup}
\\[0.75em]
\ru{\Delta \der \theta : \Delta' \qquad
    \Delta'; \Gamma \der A : \lftype \qquad
    \Delta; \msub \theta \Gamma \der M : \msub \theta A 
  }{\Delta; \msub{\theta} \Gamma \der \msub {\theta, M} X_1 \equiv M : \msub \theta A}
\qquad
\ru{\Delta; \Gamma \der X_{n+1} : A
  }{\Delta; \Gamma \der X_{n+1} \equiv \msub{\Shiftby 1} X_n : A}
\\[1.5em]
\abox{Meta-Substitution Propagation: Pushing into Expression Constructions}
\\[0.75em]
\ru{\Delta \der \theta : \Delta' \quad
    \Delta'; \Gamma,A \der F : s
  }{\Delta; \msub \theta \Gamma \der \msubp \theta {\fun A F} \equiv
      \fun {\msub \theta A} {\msub \theta F}
      : s
  } 
\quad
\ru{\Delta \der \theta : \Delta' \quad
    \Delta'; \Gamma,A \der M : B \quad
    \Delta'; \Gamma,A \der B : \lftype 
  }{\Delta; \msub \theta \Gamma \der \msubp {\theta} {\lambda M} \equiv
      \lambda {\msub \theta M} :
      \fun {\msub \theta A} {\msub \theta B}
  } 
\\[1.5em]
\ru{\Delta \der \theta : \Delta' \quad
    \Delta'; \Gamma \der E : \fun A F \quad
    \Delta'; \Gamma \der N : A
  }{\Delta; \msub \theta \Gamma \der \msubp \theta {\app E N} \equiv 
      \app {\msub \theta E} {\msub \theta N} : \esub {\shiftby 1, \msub
        \theta N} F
  }
\quad
\ru{\Delta \der \theta : \Delta' \quad
    \Delta'; \Gamma \der \sigma : \Psi \quad
    \Delta'; \Psi \der E : F
  }{\Delta ; \msub \theta \Gamma \der \msub \theta {\esub \sigma E}
      \equiv \esub {\msub \theta \sigma} {\msub \theta E}
      :  \esub {\msub \theta \sigma} {\msub \theta F} 
  }
%
\end{array}
\]
\[
\begin{array}[h]{c}
\abox{Meta-Substitution Propagation: Pushing into Ordinary Substitutions}
\\[0.75em]
\ru{\Delta \der \theta : \Delta' \qquad
    \Delta' \der \Gamma,\Gamma' \cxt \qquad
    \length{\Gamma'} = n
  }{\Delta; \msub \theta \Gamma, \msub \theta {\Gamma'} \der
      \msub \theta {\shiftby n} \equiv \shiftby n : \msub \theta
      \Gamma
  }
\\[1.5em]
\ru{\Delta \der \theta : \Delta' \qquad
    \Delta'; \Gamma \der \sigma : \Psi \qquad
    \Delta'; \Psi \der A : \lftype \qquad
    \Delta'; \Gamma \der M : \esub \sigma A
  }{\Delta; \msub \theta \Gamma \der 
      \msub \theta {(\sigma, M)}
      \equiv (\msub \theta \sigma, \msub \theta M)
      : \msub \theta \Psi, \msub \theta A
  }
\\[1.5em]  
\ru{\Delta \der \theta : \Delta' \quad
    \Delta'; \Gamma \der \tau : \Psi' \quad
    \Delta'; \Psi'  \der \sigma : \Psi
  }{\Delta; \msub \theta \Gamma \der \msub \theta {\esub \tau \sigma}
      \equiv \esub {\msub \theta \tau} {\msub \theta \sigma}
      : \msub \theta \Psi
  }
\quad
\ru{\Delta \der \theta : \Delta' \quad
    \Delta' \der \theta' : \Delta'' \quad
    \Delta''; \Gamma \der \sigma : \Psi
  }{\Delta;\msub{\msub{\theta}{\theta'}} \Gamma \der 
      \msub \theta {\msub{\theta'}\sigma} 
      \equiv \msub{\msub{\theta}{\theta'}} \sigma
      : \msub{\msub{\theta}{\theta'}} \Psi
  }
\end{array}
\]
\[
\begin{array}[h]{c}
\abox{Meta-Substitution Reductions: Pairing and Shifting}
\\[0.75em]
\ru{
    \Delta \der (\theta,M) : \Delta_0, \Delta_0', \cdec \Gamma A \quad
    \length{\Delta_0'} = n
  }{\Delta \der \msub {\theta, M} {\Shiftby {n+1}} \equiv
    \msub \theta {\Shiftby n} : \Delta_0 } 
\qquad
\ru{\Delta \der \theta : \Delta_0' \qquad
    \Delta_0' \der ({\theta'},M) : \Delta_0, \cdec \Gamma A
  }{\Delta \der \msub \theta ({\theta'}, M) 
     \equiv (\msub \theta {\theta'}, \msub \theta M) : \Delta_0, \cdec
     \Gamma A}
\\[1.5em]
\ru{ \der \Delta,\Delta_1,\Delta_2 \mcxt \quad
    \length{\Delta_1} = m \quad 
    \length{\Delta_2} = n
  }{\Delta,\Delta_1,\Delta_2 \der \msub {\Shiftby n}
    {\Shiftby m} \equiv \Shiftby{n+m} : \Delta 
  }
\\[0.75em]
\abox{Meta-Substitution Reductions: Category Laws\hspace{11cm}}
\\[0.75em]
\ru{\Delta \der \theta : \Delta_0
  }{\Delta \der \msub {\Shiftby 0} \theta \equiv \theta : \Delta_0}
\quad
\ru{\Delta \der \theta : \Delta_0
  }{\Delta \der \msub \theta {\Shiftby 0}  \equiv \theta : \Delta_0}
\quad
\ru{\Delta_1 \der \theta_1 : \Delta_2 \quad
     \Delta_2 \der \theta_2 : \Delta_3 \quad
    \Delta_3 \der \theta_3 : \Delta_4 
  }{\Delta_1 \der \msub {\theta_1} \msub {\theta_2}
    {\theta_3} \equiv \msub {\msub {\theta_1}{\theta_2}} {\theta_3} : \Delta_4
   }
\end{array}
\]
  
  \caption{Computational Laws II: Meta-substitution}
  \label{fig:cl2}
\end{figure}

To illustrate the definitional equality rules, we show how to derive
$\Delta ; \Gamma \vdash [\sigma,M]x_{n+1} \equiv [\sigma]x_n :
[\sigma]A$ which also demonstrates that such a rule is
admissible. Transitivity is essential to assemble the following
sub-derivations. We abbreviate the use of congruence by writing ``Cong'', composition by writing ``Comp'', and conversion by writing ``Conv''.

\[
\begin{array}{c}
\multicolumn{1}{l}{\mbox{Step 1:\hspace{14cm}}}\\[-1.25em]
\infer[\text{Comp, Conv}]
 {\Delta ; \Gamma \der  [\sigma,M]x_{n+1} \equiv 
                        [\sigma,M][\shift^1]x_n : [[\sigma,M]\shiftby 1]A}
 {\infer[\text{Cong}]
   {\Delta ; \Gamma \vd [\sigma,M]x_{n+1} \equiv 
                        [\sigma,M][\shift^1]x_n : [\sigma,M][\shiftby 1]A}
   {\infer[\text{Reduction}]
      {\Delta ; \Psi,B \vd x_{n+1} \equiv [\shift^1]x_n : [\shiftby 1]A}
      {\infer[\text{Weakening}]
         {\Delta ; \Psi,B \vd x_{n+1} : \esub{\shiftby 1}A}
            {\Delta ; \Psi \vd x_{n} : A}
      }
   }
 }
\\[0.75em] 
%
\multicolumn{1}{l}{\mbox{Step 2:}}\\[-1.25em]
\infer[\text{Conv}]{\Delta ; \Gamma \vdash  [\sigma, M][\shift^1]x_n \equiv  [[\sigma,M]\shift^1]x_n : [\sigma]A}{
\infer[\text{Comp}]{\Delta ; \Gamma \vdash  [\sigma, M][\shift^1]x_n \equiv  [[\sigma,M]\shift^1]x_n : [[\sigma,M]\shift^1]A}{
\Delta ; \Gamma \vdash \sigma,M : \Psi,B & 
\Delta ; \Psi,B \vdash \shift^1 : \Psi & \Delta ; \Psi \vdash x_n : A
  }
& 
\infer[\text{Cong}]{\Delta ; \Gamma \vdash [[\sigma,M]\shift^1]A \equiv [\sigma]A : \lftype}{
  \begin{array}[h]{c}
    \cal{D} \\ 
    \Delta ; \Gamma \vdash [\sigma,M]\shift^1 \equiv \sigma : \Psi
  \end{array}
}
}
\\[0.85em]
\multicolumn{1}{l}{\mbox{Step 3:}}\\[-1.35em]
\infer[\text{Cong}]{\Delta ; \Gamma \vdash [[\sigma, M]\shift^1]x_n \equiv  [\sigma]x_n : [\sigma]A}{
  \begin{array}[h]{c}
\cal{D} \\
\Delta ; \Gamma \vdash [\sigma, M]\shift^1 \equiv \sigma :   \Psi     
  \end{array}
  } 
\end{array}
\]

where  
\vspace{-0.65cm}
\[
\begin{array}[b]{cc}
  \begin{array}[h]{c}
\\[-2.75em]
\cal{D} =    
  \end{array}
 &
\infer[\text{Transitivity}]{\Delta ; \Gamma \vdash [\sigma,M]\shift^1 \equiv \sigma :\Psi}{
  \infer[\text{Pairing}]{\Delta ; \Gamma \vdash [\sigma,M]\shift^1 = [\sigma]\shift^0 : \Psi}{
          \Delta ; \Gamma \vdash \sigma , M : \Psi,B} 
 & 
  \infer[\text{Category Laws}]{\Delta ; \Gamma \vdash [\sigma]\shift^0 \equiv \sigma : \Psi}{
         \Delta ; \Gamma \vdash \sigma : \Psi}
}  
\end{array}
\]
Similarly, we can show that $\Delta ; \Gamma \vdash \msub{\theta, M} X_{n+1} \equiv \msub{\theta}X_n : \msub{\theta}A$ is admissible.

\subsubsection{Extensionality Laws}
As mentioned earlier, we take into account $\beta$-reductions and
$\eta$-expansions. In particular, we consider $\eta$-rules for
ordinary substitutions as well as meta-substitutions. 

\[
\begin{array}{c}
\ru{\Delta; \Gamma \der M \hastype \fun A B 
  }{\Delta; \Gamma \der M \equiv 
      \lambda (\app {(\esub {\shiftby 1} M)} {x_1}) \hastype \fun A B
  }
\\[1.5em]
\ru{\Delta \der \Gamma,A,\Gamma' \cxt \qquad \length{\Gamma'} = n
  }{\Delta; \Gamma,A,\Gamma' \der \shiftby n \equiv (\shiftby {n+1}, x_{n+1}) : \Gamma,A}
\qquad
\ru{\der \Delta,\cdec \Gamma A,\Delta' \mcxt \qquad \length{\Delta'} = n
  }{\Delta, \cdec \Gamma A,\Delta' \der \mshiftby n \equiv (\mshiftby {n+1},
    X_{n+1}) : \Delta,\cdec \Gamma A}
\end{array}
\]

\LONGVERSION{
\subsubsection{Compatibility and Equivalence Rules}
\[
\begin{array}{c}
\abox{Substitution Congruence\hspace{12cm}}\\[0.75em]
\infer{\Delta; \Gamma,\Gamma' \vd \shiftby n \equiv \shiftby n \hastype \Gamma}
{ \Delta \der \Gamma,\Gamma' \cxt & \length{\Gamma'} = n} 
\hspace{2em}
\infer{\Delta; \Gamma \vd (\sigma, M) \equiv (\sigma', M')
  \hastype \Psi, A}{\Delta; \Gamma \vd M \equiv M' \hastype [\sigma]A
  & \Delta; \Gamma \vd \sigma \equiv \sigma' \hastype \Psi}
\\
\mbox{etc. + equivalences}
\\[0.75em]
\abox{Simultaneous Congruence}\\[0.75em]
\infer{\Delta; \Gamma \vd \lam{}{M} \equiv \lam{}{N} \hastype\Pi A. B}{
 \Delta; \Gamma, A \vd M \equiv N \hastype B}
\\[0.75em]
\infer{\Delta \Gamma \vd M_1\,M_2 \equiv N_1\,N_2 \hastype [\shift^0, M_2]A_1}{
\Delta; \Gamma \vd M_1 \equiv N_1 \hastype \Pi A_2\ldot A_1 & 
\Delta; \Gamma \vd M_2 \equiv N_2 \hastype A_2}
\\[0.75em]
\infer{\Delta; \Gamma \vd  \clo{M_1}{\sigma_1} \equiv
       \clo{M_2}{\sigma_2} \hastype \esub {\sigma_1} A
     }{\Delta; \Gamma \der \sigma_1 \equiv \sigma_2 : \Psi &
       \Delta; \Psi   \der M_1 \equiv M_2 : A
     }
\\[0.75em]
\infer{\Delta; \Gamma \vd  X_n \equiv X_n  \hastype A}{}
\qquad
\infer{\Delta; \Gamma \vd  x_n \equiv x_n  \hastype A}{}
\\[0.75em]
\multicolumn{1}{l}{\mbox{Equivalence}}\\[0.75em]
\infer{\Delta; \,\Gamma \vd M \equiv N \hastype A}{\Delta; \,\Gamma
  \vd N \equiv M \hastype A} \hspace{2em}
\infer{\Delta;\,\Gamma \vd M \equiv M' \hastype A}{\Delta;\,\Gamma \vd
  M \equiv N \hastype A & \Delta;\,\Gamma \vd N \equiv M' \hastype
  A}
\\[0.75em]
\multicolumn{1}{l}{\mbox{Type conversion}}\\[0.75em]
\infer{\Delta;\,\Gamma \vd M \equiv N \hastype B}{\Delta;\,\Gamma \vd
  M \equiv N \hastype A & \Delta; \Gamma \vd A \equiv B \hastype
  \lftype}
\\[0.75em]
\abox{Family congruence}\\[0.75em]
\infer{\Delta; \Gamma \vd a \equiv a \hastype E}{a \oftp E \mbox{ in } \Sigma}
\hspace{2em}
\infer{\Delta; \Gamma \vd A\,M \equiv B\,N \hastype [\shift^0, M]K}{
       \Delta; \Gamma \vd A \equiv B \hastype \Pi C. K & 
       \Delta; \Gamma \vd M \equiv N \hastype C} \\[1.5em]

\infer{\Delta; \Gamma \vd \Pi A_1. A_2 \equiv \Pi B_1. B_2 \hastype \lftype}{
       \Delta; \Gamma \vd A_1 \equiv B_1 \hastype \lftype & 
       \{\Delta; \Gamma \vd A_1 \hastype \lftype\} & 
       \Delta; \Gamma, A_1 \vd A_2 \equiv B_2 \hastype \lftype}
\\[0.75em]
\abox{Family equivalence}\\[0.75em]
\infer{\Delta; \Gamma \vd B \equiv A \hastype K}{
  \Delta; \Gamma \vd A \equiv B \hastype K} 
\hspace{2em}
\infer{\Delta; \Gamma \vd A \equiv C \hastype K}{
  \Delta; \Gamma \vd A \equiv B \hastype K & \Delta; \Gamma \vd B \equiv C \hastype K}
\\[0.75em]
\abox{Kind conversion}\\[0.75em]
\infer{\Delta; \Gamma \vd A \equiv B \hastype L}{
  \Delta; \Gamma \vd A \equiv B \hastype K & 
  \Delta; \Gamma \vd K \equiv L \hastype \lfkind}
\\[0.75em]
\abox{Kind congruence\hspace{12cm}}\\[0.75em]
\infer{\Delta; \Gamma \vd \lftype \equiv \lftype \hastype \lfkind}{}
\\[0.75em]
\infer{\Delta; \Gamma \vd \Pi A. K \equiv \Pi B. L \hastype \lfkind}{
       \Delta; \Gamma \vd A \equiv B \hastype \lftype & 
       \{\Delta; \Gamma \vd A \hastype \lftype\} & 
       \Delta; \Gamma, A \vd K \equiv L \hastype \lfkind}
\\[0.75em]
\abox{Kind equivalence}\\[0.75em]
\infer{\Delta; \Gamma \vd L \equiv K \hastype \lfkind}{
  \Delta; \Gamma \vd K \equiv L \hastype \lfkind} 
\hspace{2em}
\infer{\Delta; \Gamma \vd K \equiv L \hastype \lfkind}{
  \Delta; \Gamma \vd K \equiv L' \hastype \lfkind & 
  \Delta; \Gamma \vd L' \equiv L \hastype \lfkind} 
\end{array}
\] 
} 

\subsection{Properties}
Next, we prove some standard properties about the presented type
assignment system. First, we show that contexts are indeed
well-formed. 
%
\begin{lemma}[Context well-formedness] \label{lem:cwf} \bla
  \begin{enumerate}
  \item If $\Delta,\Delta' \der J$ or $\Delta,\Delta'; \Gamma \der J$ 
        then $\der \Delta \mcxt$.
  \item If $\Delta \der \theta : \Delta'$ or $\Delta \der \theta
    \equiv \theta' : \Delta'$ then $\der \Delta' \mcxt$.
  \item If $\Delta; \Gamma,\Gamma' \der J$ then $\Delta \der \Gamma \cxt$.
  \item If $\Delta; \Gamma \der \sigma : \Psi$ or $\Delta ; \Gamma
    \der \sigma \equiv \sigma' : \Psi$ then $ \Delta \der \Psi \cxt$.
  \end{enumerate} 
The height of the output derivation is bounded by the height of the
input derivation, in all cases.
\end{lemma}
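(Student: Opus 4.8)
The plan is to prove all four items (and both sub-forms of item~1) \emph{simultaneously} by induction on the height of the given derivation, casing on its last rule. A simultaneous induction is forced by the mutual dependency of the judgements: the atomic typing rules (for $\lftype$, constants, and the variable/meta-variable leaves) carry context- and meta-context-formation premises, while the formation rules for $\Delta \der \Psi,A \cxt$ and $\der \Delta,\cdec\Psi A \mcxt$ in turn have typing premises. Throughout I would track heights and check that the extracted well-formedness derivation never exceeds the height of the input; in almost every case it is literally a sub-derivation delivered by the induction hypothesis, so the bound is immediate.

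For item~1, each atomic rule has a premise whose meta-context is $\Delta$ (or a prefix of it), so the item~1 hypothesis applied to that premise yields $\der \Delta \mcxt$ with strictly smaller height. The only delicate case is meta-variable introduction, whose conclusion has meta-context $\Delta,\cdec\Gamma A$: when the chosen split $\Delta,\cdec\Gamma A = \Delta_1,\Delta_2$ keeps $\cdec\Gamma A$ inside $\Delta_1$, there is no sub-derivation to point at, and one must instead \emph{reassemble} $\der \Delta,\cdec\Gamma A \mcxt$ by reapplying the meta-context-formation rule to the premise $\Delta;\Gamma \der A : \lftype$. This yields a derivation of height exactly equal to the input, which is precisely why the bound must read $\leq$ rather than $<$. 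The same three patterns recur for items~2,~3,~4: the $\shift^n$ and $\Shift^n$ rules give the domain by \emph{peeling}, i.e.\ inverting one formation step and applying item~3 (so that $\Delta \der \Psi,\Gamma \cxt$ produces $\Delta \der \Psi \cxt$) or item~1 (so that $\der \Delta,\Delta' \mcxt$ produces $\der \Delta \mcxt$); the pairing rules $(\sigma,M)$ and $(\theta,M)$ reassemble the extended domain context from their kinding premise; and the composition/ordinary-closure rules inherit the domain directly from a premise over the \emph{same} meta-context, so the hypothesis applies unchanged. The equality rules mirror the typing rules one-for-one, and reflexivity, symmetry, and transitivity merely pass the hypothesis through.

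The main obstacle is the family of meta-substitution closures, e.g.\ $\Delta;\msub\theta\Gamma \der \cclo E\theta : \cclo F\theta$ from $\Delta \der \theta : \Delta'$ and $\Delta';\Gamma \der E : F$, and likewise for $\cclo\sigma\theta$ and their equality forms. Here item~1 stays immediate, since the premise $\Delta \der \theta : \Delta'$ yields $\der \Delta \mcxt$; but items~3 and~4 now demand $\Delta \der \msub\theta\Gamma \cxt$ and $\Delta \der \msub\theta\Psi \cxt$, whereas the hypothesis only delivers well-formedness over the \emph{old} meta-context $\Delta'$. These judgements must be transported along $\theta$. I would discharge this by a subsidiary induction on the offending context, following the eager definition $\msub\theta\cempty = \cempty$ and $\msub\theta{(\Psi,A)} = \msub\theta\Psi,\msub\theta A$: the empty case needs only $\der \Delta \mcxt$ (from $\Delta \der \theta : \Delta'$ via item~1), while the extension step, after inverting one formation step to expose $\Delta';\Psi \der A : \lftype$, reduces to the type-level meta-substitution statement $\Delta;\msub\theta\Psi \der \msub\theta A : \lftype$.

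I expect this transport to be the crux, since it is exactly where context well-formedness meets the meta-substitution lemma; the cleanest route is either to fold the needed instance of that lemma into the same simultaneous induction, or to establish the type-level meta-substitution property first and cite it here. Making the heights line up in this transported case is the subtle bookkeeping point, and it is what renders the stated height bound usable when this lemma is later fed into the validity theorem.
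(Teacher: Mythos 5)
Your overall strategy is exactly the paper's: the paper proves this lemma by a simultaneous induction over all judgments and records no further detail, and your case analysis (reassembly at the variable and meta-variable leaves, which is indeed why the bound is $\leq$ rather than $<$; peeling at the shift rules; inheritance at compositions; pass-through for the equivalence rules) is the right elaboration of that induction. The one genuine divergence is at the crux case you identify, and there your solution is heavier than it needs to be. To transport $\Delta' \der \Psi,A \cxt$ along $\Delta \der \theta : \Delta'$ you do not need any meta-substitution lemma: in this calculus $\msub \theta A$ is not the result of executing a substitution but is itself the syntactic closure $\cclo{A}{\theta}$, so the meta-closure typing rule applied to $\Delta \der \theta : \Delta'$ and $\Delta'; \Psi \der A : \lftype$ yields $\Delta; \msub \theta \Psi \der \msub \theta A : \msub \theta \lftype$ in one step; the computational law $\msub \theta \lftype \equiv \lftype$ together with the conversion rule then gives $\Delta; \msub \theta \Psi \der \msub \theta A : \lftype$, and context formation closes the step. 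So your subsidiary induction on the context is right, but each of its steps is a short, fixed derivation assembled from rules already in the system---this is precisely what explicit meta-substitutions buy, and it dissolves the circularity threat lurking in your alternative of ``establishing the type-level meta-substitution property first.'' Note also that the same transport is needed at the meta-variable leaves themselves, whose conclusions live in the context $\msub{\Shiftby 1}\Gamma$, not only at the closure and computational-law rules. Finally, your caution about heights is warranted, and it cuts against the paper as much as against you: the transported derivation stacks three rules (closure typing, conversion, context formation) on top of the premise derivations, whereas the input rule stacks only one, so when the derivation of $\Delta \der \theta : \Delta'$ is the taller premise the output can exceed the input height. The stated bound goes through in every other case, but at this one it needs either a weaker formulation or a cleverer measure; your claim that careful bookkeeping ``renders the stated height bound usable'' is optimistic as it stands.
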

\begin{proof}
  By simultaneous induction over all judgments.
\end{proof}

The following inversion theorem for typing is standard for PTSs and
are necessary due to the type conversion rule which makes inversion a
non-obvious property.  It allows us to classify expressions into terms,
types, kinds, and the sort $\lfkind$.  
We write $\Delta;\Gamma \der E \equiv E'$ if there exists a sort $s$
such that $\Delta;\Gamma \der E \equiv E' : s$.
\begin{theorem}[Inversion of typing] \label{thm:inv} \bla
\begin{enumerate}
\item There is no derivation of $\Delta;\Gamma \der \lfkind : E$.
\item If $\Delta;\Gamma \der \lftype : E$ then $E = \lfkind$.      
\item If $\Delta;\Gamma \der a : E$ then $\Delta;\Gamma \der E \equiv
  \Sigma(a)$.
\item If $\Delta;\Gamma \der \fun A E : F$ then 
   $\Delta;\Gamma \der A : \lftype$ and 
   $\Delta; \Gamma,A \der E : F$ and 
   either $F = \lfkind$ or $\Delta; \Gamma \der F \equiv \lftype$. 
\item If $\Delta;\Gamma \der x_{n+1} : A$ then $\Gamma =
  \Gamma_1,A',\Gamma_2$ with $\length{\Gamma_2} = n$ and
  $\Delta;\Gamma \der A \equiv \esub{\shiftby {n+1}} {A'}$. 
\item If $\Delta; \Gamma \der X_{n+1} : A$ then $\Delta =
  \Delta_1,\cdec{\Gamma'}{A'},\Delta_2$ with $\length{\Delta_2} = n$ and
  $\Gamma = \msub{\Shiftby{n+1}}{\Gamma'}$ and
  $\Delta;\Gamma \der A \equiv \msub{\Shiftby {n+1}} {A'}$.
\item If $\Delta; \Gamma \der \lambda M : C$ then there are $A,B$ such
  that $\Delta; \Gamma \der C \equiv \fun A B$ 
  and
  $\Delta; \Gamma \der A : \lftype$ and \\
  $\Delta; \Gamma, A \der B : \lftype$ 
  and
  $\Delta; \Gamma, A \der M : B$.
\item If $\Delta; \Gamma \der \app E N : C$ then there are $A,F$ such
  that $\Delta; \Gamma \der E : \fun A F$ and 
  $\Delta; \Gamma \der N : A$ \\and
  $\Delta; \Gamma \der C \equiv \sgsub N F$. 
\item If $\Delta; \Gamma \der \esub \sigma E : F$ then there are $\Psi,{F'}$
  such that $\Delta; \Gamma \der \sigma : \Psi$ and $\Delta; \Psi \der
  E : {F'}$ \\ and $\Delta; \Gamma \der E \equiv \esub \sigma {F'}$.
\item If $\Delta; \Gamma \der \msub \theta E : F$ then there are
  $\Delta',\Gamma', {F'}$
  such that $\Delta \der \theta : \Delta'$ and 
  $\Delta; \Gamma' \der E : {F'}$ and 
  $\Gamma = \msub \theta {\Gamma'}$ \\and 
  $\Delta; \Gamma \der F \equiv \msub \theta {F'}$. 
\end{enumerate}
\end{theorem}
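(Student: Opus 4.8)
The plan is to prove all ten clauses simultaneously by induction on the derivation of $\Delta;\Gamma \der E : F$, with a case analysis on the last rule. The key observation is that, once the shape of the subject $E$ is fixed (one of $\lfkind$, $\lftype$, $a$, $\fun A E'$, $x_{n+1}$, $X_{n+1}$, $\lambda M$, $\app{E'}{N}$, $\esub\sigma{E'}$, or $\msub\theta{E'}$), only two rules can have produced the judgement: the unique syntax-directed rule that introduces that subject, and the type-conversion rule, which leaves both $E$ and $\Gamma$ untouched and merely replaces the type. This syntax-directed/conversion dichotomy drives the entire argument, and I would instantiate it clause by clause.

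In the syntax-directed case the premises of the introducing rule already supply the asserted data, with any claimed type equation holding by reflexivity; for instance, for $\lambda M$ the $\lambda$-rule directly yields $\Delta;\Gamma,A \der M : B$ and $\Delta;\Gamma,A \der B : \lftype$ with conclusion type exactly $\fun A B$, and the side fact $\Delta;\Gamma \der A : \lftype$ is recovered from context well-formedness (Lemma~\ref{lem:cwf}) followed by inversion of the context-formation judgement, which is itself syntax-directed and hence unproblematic. The two closure clauses (9) and (10) are read off the two closure rules, merging the $\lfkind$-variant with the general variant by permitting the witness type to be $\lfkind$. The variable clauses (5) and (6) need slightly more care: their introducing rule is the weakening step from $x_n$ to $x_{n+1}$ (respectively $X_n$ to $X_{n+1}$), so I would recurse on the sub-derivation for $x_n$; peeling off the $n$ trailing declarations exposes the witnessed declaration $A'$ and accounts for the accumulated shift $\esub{\shiftby{n+1}}{A'}$ (respectively $\msub{\Shiftby{n+1}}{A'}$ together with the constraint $\Gamma = \msub{\Shiftby{n+1}}{\Gamma'}$).

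In the conversion case the last rule has premises $\Delta;\Gamma \der E : F_1$ and $\Delta;\Gamma \der F_1 \equiv F : s$. I would apply the induction hypothesis to the first premise. Since conversion alters neither $E$ nor $\Gamma$, every structural conclusion (the shape of $\Gamma$, the existence of witnesses such as $A$, $B$, $\Psi$, or $\Delta'$) is inherited verbatim, while the induction hypothesis additionally delivers a canonical type equation $\Delta;\Gamma \der F_1 \equiv G$. Threading this with $\Delta;\Gamma \der F_1 \equiv F$ by symmetry and transitivity (available because the equality judgement is a congruence) yields the required $\Delta;\Gamma \der F \equiv G$. For clause (4) I would additionally re-apply conversion inside the extended context $\Gamma,A$ to upgrade $\Delta;\Gamma,A \der E' : F_1$ to $\Delta;\Gamma,A \der E' : F$.

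The genuinely delicate clauses are (1) and (2), which assert a syntactic rather than a merely convertible identity, and I expect their interaction with the conversion rule to be the main obstacle, especially since the full validity theorem (Theorem~\ref{thm:synval}) is not yet available. Clause (1) I would settle first and in isolation: no rule concludes $\lfkind : E$, and the only candidate, conversion, would demand a strictly smaller derivation of $\lfkind : F_1$, which the induction hypothesis forbids. Clause (2) then rests on (1): in the conversion case the induction hypothesis gives $F_1 = \lfkind$, so the equality premise would read $\lfkind \equiv F : s$; but $\lfkind$ never occurs as a side of any derivable equality (an easy auxiliary induction on equality derivations, since no rule places $\lfkind$ in subject position and the equivalence rules only permute existing sides), so this case cannot arise, leaving only the $\lftype$-axiom and forcing $E = \lfkind$ on the nose. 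Everywhere else the combination of the syntax-directed/conversion split with transitivity of $\equiv$ is routine.
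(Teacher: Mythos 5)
Your proposal is correct and takes essentially the same approach as the paper: the paper's proof is precisely an induction on the typing derivation that peels off type-conversion steps and combines the resulting equations using transitivity of definitional equality. The additional details you supply---recovering $\Delta;\Gamma \der A : \lftype$ via Lemma~\ref{lem:cwf}, recursing on the sub-derivation in the variable clauses, and the auxiliary observation that $\lfkind$ cannot occur as a side of a derivable equation---are elaborations of steps the paper leaves implicit, not a different argument.
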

\begin{proof}
  By induction on the typing derivation, peeling off the type
  conversion steps and combining them with transitivity.
\end{proof}
Expression $E$ is a \emph{kind} if $\Delta;\Gamma \der E : \lfkind$ for some
$\Delta,\Gamma$, it is a \emph{type family} if $\Delta;\Gamma \der E : K$ for
some kind $K$ and some $\Delta,\Gamma$, and it is a \emph{term} if
$\Delta;\Gamma \der E : A$ for some $A,\Delta,\Gamma$ with
$\Delta;\Gamma \der A : \lftype$.

The following inversion statement for meta-variables under a
substitution is crucial for the correctness of algorithmic equality
(Sec.~\ref{sec:aleq}) and bidirectional type checking (Sec.~\ref{sec:bidir}).
\begin{corollary} \label{cor:invmeta}
  If $\Delta; \Gamma \der \esub \sigma {X_{m}} : A$ then 
$\Delta = \Delta_1, \cdec \Psi {A'}, \Delta_2$ with 
$\length{\Delta_2} = m-1$ and 
$\Delta; \Gamma \der \sigma : \msub {\Shiftby {m}} \Psi$ and 
$\Delta; \Gamma \der A \equiv \dsub \sigma {\Shiftby {m}} {A'}$.  
\end{corollary}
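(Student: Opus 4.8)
The plan is to derive the statement purely by composing two clauses of the inversion theorem, with no fresh induction. First I would apply the closure-inversion clause (Theorem~\ref{thm:inv}, clause~9) to the hypothesis $\Delta; \Gamma \der \esub \sigma {X_m} : A$. This produces a context $\Psi_0$ and an expression $B$ together with three facts: $\Delta; \Gamma \der \sigma : \Psi_0$, the inner typing $\Delta; \Psi_0 \der X_m : B$, and the type equation $\Delta; \Gamma \der A \equiv \esub \sigma B$.

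Next I would feed the middle judgement $\Delta; \Psi_0 \der X_m : B$ into the meta-variable inversion clause (Theorem~\ref{thm:inv}, clause~6), instantiated at $n = m-1$. This splits the meta-context as $\Delta = \Delta_1, \cdec \Psi {A'}, \Delta_2$ with $\length{\Delta_2} = m-1$, identifies the domain $\Psi_0 = \msub {\Shiftby m} \Psi$, and yields $\Delta; \Psi_0 \der B \equiv \msub {\Shiftby m} {A'}$. Two of the three required conclusions are then immediate: the shape of $\Delta$ with $\length{\Delta_2} = m-1$ is exactly the claimed one, and substituting the identity $\Psi_0 = \msub {\Shiftby m} \Psi$ into $\Delta; \Gamma \der \sigma : \Psi_0$ gives $\Delta; \Gamma \der \sigma : \msub {\Shiftby m} \Psi$.

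It remains only to assemble the equation $\Delta; \Gamma \der A \equiv \dsub \sigma {\Shiftby m} {A'}$, where by definition $\dsub \sigma {\Shiftby m} {A'} = \esub \sigma {\msub {\Shiftby m} {A'}}$. Here I would lift the equation $\Delta; \Psi_0 \der B \equiv \msub {\Shiftby m} {A'}$ under $\sigma$. Because $\sigma$ has domain $\Psi_0$, which is precisely the context of this equation, reflexivity of $\sigma$ together with the congruence rule for substitution closures produces $\Delta; \Gamma \der \esub \sigma B \equiv \esub \sigma {\msub {\Shiftby m} {A'}}$. Chaining this with $\Delta; \Gamma \der A \equiv \esub \sigma B$ from the first inversion, via symmetry and transitivity, delivers the desired equation.

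The argument is essentially routine, so the only delicate points are bookkeeping ones. The first is keeping the domain context $\Psi_0$ identified across the two inversions, since it is this identification that lets the closure congruence fire in the correct context. The second is discharging the sort at which the lifted equation lives: the congruence rule returns an equation at $\esub \sigma s$, and the propagation law $\esub \sigma \lftype \equiv \lftype$ (and its kind analogue) puts it back at a genuine sort, which is all the abbreviated judgement $\der A \equiv \dsub \sigma {\Shiftby m} {A'}$ demands. I do not expect any genuine obstacle beyond this, as all the inductive content is already packaged inside Theorem~\ref{thm:inv}.
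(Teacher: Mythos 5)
Your proposal is correct and matches the paper's intent: the result is stated as a corollary of Theorem~\ref{thm:inv} precisely because it follows by composing the closure-inversion clause~(9) with the meta-variable clause~(6) and gluing the two type equations with congruence, reflexivity of $\sigma$, transitivity, and the $\esub{\sigma}{\lftype} \equiv \lftype$ propagation law, exactly as you describe. The only cosmetic remark is that the ``kind analogue'' you mention is never needed here, since the type $B$ of a meta-variable always lives at sort $\lftype$.
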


\begin{theorem}[Syntactic Validity]
  \label{thm:synval} \bla
\begin{enumerate}
\item If $\Delta; \Gamma \der E : F$ or $\Delta ; \Gamma \der E_1
  \equiv E_2 : F$ then $\Delta; \Gamma \der F : s$
  for some sort.
\item If $\Delta; \Gamma \der E \equiv E' : F$ then $\Delta; \Gamma \der E : F$
  and $\Delta; \Gamma \der E' : F$. 
\item If $\Delta \der \theta \equiv \theta' : \Delta'$ then $\Delta \der \theta : \Delta'$
  and $\Delta \der \theta' : \Delta'$. 
\item If $\Delta; \Gamma \der \sigma \equiv \sigma' : \Psi$ then $\Delta; \Gamma \der \sigma : \Psi$
  and $\Delta; \Gamma \der \sigma' : \Psi$. 
\end{enumerate}
\end{theorem}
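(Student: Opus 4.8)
The plan is to prove all four statements by a single simultaneous induction on the height of the given derivation, exploiting the mutual dependency between the parts. Before starting I would record two reading conventions. First, in Part~1 the conclusion ``$\Delta;\Gamma \der F : s$'' must be read up to the usual top-sort exception: when $E$ (or $E_1,E_2$) is itself a kind we have $F = \lfkind$, and since Theorem~\ref{thm:inv}(1) forbids $\lfkind$ from having a type, the honest statement is ``$F = \lfkind$, or $\Delta;\Gamma \der F : s$.'' Second, the only external tools I need are Lemma~\ref{lem:cwf} (to extract $\der\Delta\mcxt$, $\Delta\der\Gamma\cxt$, and hence $\Delta;\Gamma\der A:\lftype$ whenever a declaration $A$ occurs in a well-formed context) and the inversion results Theorem~\ref{thm:inv} and Corollary~\ref{cor:invmeta}. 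I would also use freely that the identity substitution is well-typed, $\Delta;\Gamma\der\shiftby 0:\Gamma$, as the instance of the shift rule with empty suffix.

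For Part~1, the classifier of each pure typing rule is read off its premises. The axioms and the variable/meta-variable rules classify by $\lfkind$ or by a shifted type $\esub{\shiftby{n}}A$, well-sorted because $A:\lftype$ and $\esub{\shiftby{n}}\lftype\equiv\lftype$; the $\Pi$-rule classifies by a sort $s$, which is $\lftype:\lfkind$ or the top sort; the $\lambda$-rule classifies by $\fun A B$, well-sorted by the $\Pi$-rule from its premises $A:\lftype$ and $B:\lftype$; the application rule classifies by $\sgsub N F$, and the induction hypothesis on the premise $E:\fun A F$ makes $\fun A F$ well-sorted, whose inversion (Theorem~\ref{thm:inv}(4)) yields $F:s$ over $\Gamma,A$, so that $\sgsub N F$ is well-sorted by the closure rule; the closure rules classify by $\esub\sigma F$ or $\msub\theta F$ and use the induction hypothesis on the classifier of the enclosed expression. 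The only genuinely self-referential case is the type-conversion rule, whose classifier $F_2$ comes with a premise $\Delta;\Gamma\der F_1\equiv F_2:s$; here I invoke the Part~2 induction hypothesis on that equality to obtain $F_2:s$. For the equality judgments the classifier is in every rule either copied from a typing premise or inherited from a sub-equality, so Part~1 for equalities reduces to the typing cases together with the Part~1/Part~2 hypotheses.

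For Part~2, I would walk through the equality rules. The equivalence rules (reflexivity, symmetry, transitivity) are immediate from the induction hypothesis, and the congruence rules are discharged by applying the matching typing rule to the two endpoints, using Part~2 on the sub-equalities and, where the nominal classifiers of premise and conclusion differ, the type-conversion typing rule justified by Part~1. The computational rules carry the real work: for each I build typing derivations of both sides and reconcile their classifiers. For $\beta$, the left side $(\lambda M)\,N$ is typed by composing the $\lambda$- and application rules, while the right side $\sgsub N M$ is typed by the closure rule once the substitution $(\shiftby 0, N):\Gamma,A$ is shown well-typed (which needs $A:\lftype$ from Lemma~\ref{lem:cwf} and $\shiftby 0:\Gamma$); both receive classifier $\sgsub N B$. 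The substitution- and meta-substitution-propagation, the pairing/shifting, and the category laws are handled the same way, each time using the closure, extension, and shift rules to type both sides and the already-available composition and pairing equalities to convert one nominal classifier into the other. Parts~3 and~4 are proved in the same sweep: the substitution- and meta-substitution-equality rules are treated exactly like the expression-equality rules, with Lemma~\ref{lem:cwf} supplying well-formedness of the domain contexts $\Psi$ and $\Delta'$.

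I expect the reduction and propagation cases of Part~2 to be the bottleneck, because there the two sides carry syntactically different classifiers that agree only up to definitional equality, so each case forces an explicit conversion step built from the derivable substitution-composition and pairing/shifting laws. The sharpest instance is the $\eta$-rule $M\equiv\lambda(\app{(\esub{\shiftby 1}M)}{x_1}):\fun A B$: typing the right-hand side requires weakening $M$ to $\esub{\shiftby 1}M$ in the extended context $\Gamma,A$, applying it to $x_1:\esub{\shiftby 1}A$, and re-abstracting, after which the resulting $\Pi$-type must be shown definitionally equal to $\fun A B$ by propagating the shift through the body and using the pairing and category laws. Getting these classifier reconciliations to line up uniformly, and checking that no appeal to Part~1 or Part~2 reaches outside the current derivation so that the simultaneous induction stays well-founded, is the crux; everything else is bookkeeping over the rule set.
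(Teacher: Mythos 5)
Your proposal is correct and takes essentially the same route as the paper, whose entire proof of Theorem~\ref{thm:synval} is the single sentence ``By simultaneous induction over all judgments''; your case analysis simply supplies the details that the paper elides, including the mutual appeal from Part~1's conversion case to Part~2 and the classifier reconciliations in the computational and $\eta$ cases. Your opening caveat is also well taken: since $\lfkind$ is untypable by Theorem~\ref{thm:inv}(1), Part~1 must indeed be read as ``$F = \lfkind$ or $\Delta;\Gamma \der F : s$,'' a top-sort convention the paper leaves implicit.
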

\begin{proof}
  By simultaneous induction over all judgments.
\end{proof}

\section{Evaluation and Algorithmic Equality}
\label{sec:algo}

In this section, we define a weak head normalization strategy together
with algorithmic equality. The goal is to treat ordinary substitutions
and meta-substitutions lazily; in particular, we aim to postpone
shifting of substitutions until necessary.  
For the treatment of LF, an untyped algorithmic equality is
sufficient.  The design of the algorithm follows Coquand 
\cite{coquand:conversion} with refinements from joint work with the
first author \cite{abelCoquand:lfsigma}.  In this article, we only
show soundness of the algorithm; completeness can be proven using
techniques of the cited works.  However, an adaptation to de Bruijn
style and explicit substitutions is necessary; 
we leave the details to future work%
\LONGVERSION{, a sketch can be found in the appendix}.

We first characterize
our normal forms by defining normal and neutral expressions where
expressions include terms, types, and kinds. 
Normal forms are exactly the expressions we can type-check with a
bidirectional algorithm (see Section~\ref{sec:bidir}).  Note that type
checking normal forms is sufficient in practice, since the input to
the type checker, written by a user, is almost always in
$\beta$-normal form (or can be turned into normal form by introducing
typed let-definitions).

Normal substitutions are
built out of normal expressions. However, it is worth keeping in mind
that our typing rules will ensure that they only contain terms and
not types, since we do not support type-level variables. Our normal
forms are only $\beta$-normal, not necessarily $\eta$-long. Only
meta-variables are associated with an ordinary normal substitution,
all other closures have been eliminated.   

\[
\begin{array}{llrl}
\mbox{Normal expressions} &
  V & ::= & s \mid \fun V V' \mid \lambda V \mid U 
\\
\mbox{Neutral expressions} &
  U & ::= & a \mid x_n \mid \esub \nu {X_n} \mid \app U V 
\\[0.75em]
\mbox{Normal substitutions} &
  \nu & ::= & \shiftby n \mid (\nu, V)
\end{array}
\]

Next, we define weak head normal forms (whnf). 
Since we want to treat
ordinary substitutions and meta-substitutions lazily and in particular
want to postpone the complete computation of their compositions, 
we cannot require
that substitutions and meta-substitutions are already in normal
form. Hence, we introduce environments $\rho$ for ordinary
substitutions and similarly meta-substitutions $\eta$ for describing
substitutions and meta-substitutions that are in weak head
normal form. 
Closures are expressions $E$ in an environment $\rho$ and
a meta-environment $\eta$.  It is convenient to also treat variables
$x_n$ as closures. These arise when stepping under a binder in type
and equality checking and are the synonym of Coquand's generic values
\cite{coquand:type}.

\[
\begin{array}{llrl}
\mbox{Weak head normal forms} &
  W & ::= & \lftype
    \mid \dsub \rho \eta {\fun A B} 
    \mid \dsub \rho \eta {\lambda M}
    \mid H \\
\mbox{Neutral weak head normal forms} &
  H & ::= & a \mid x_n \mid \esub \rho {X_n} \mid \app H L 
\\
\mbox{Closures} &
  L & ::= & x_n \mid \dsub \rho \eta E 
\\[0.75em] 
\mbox{Environments} &
  \rho & ::= & \shiftby n \mid (\rho, L) \mid \esub {\shiftby n} \rho \\
\mbox{Meta-environments} &
  \eta & ::= & \Shiftby n \mid (\eta, M) \\
\end{array}
\]

Our weak head normal forms and closures combine substitutions and
meta-substitutions
and our whnf-reduction
strategy simultaneously treats substitutions and meta-substitutions. 
Instead of coupling expressions with two suspended substitutions, we
could have introduced a joint simultaneous substitutions and closures
built with them. The path taken in this paper builds on the individual
substitution operations instead of defining a new joint substitution
operation. To clarify the nature and the interplay of ordinary
substitutions and meta-substitutions it is helpful to
consider the typing rule of closures $\dsub \rho \eta E$:

\[
\begin{array}{c}
\infer{\Delta; \Psi \vdash \dsub \rho  \eta E :  \dsub \rho \eta F}
{  \Delta; \Psi \vdash \rho : \msub{\eta}\Psi' & 
 \Delta \vdash \eta : \Delta' & 
 \Delta' ; \Psi' \vdash E : F 
}
%
%
%
\end{array}
\]

Intuitively, this means to obtain an expression $E'$ which makes actually
sense in $\Delta$ and $\Psi$, we first compute $\msub{\eta}E$ and
subsequently apply the ordinary substitution $\rho$ to
arrive at $E' \equiv ([\rho]\msub{\eta}E)$.

\paradot{Shift propagation}\label{abbrev-sclo}
While we treat shifts in the environment as an explicit
operation---to avoid a traversal when lifting an environment under a
binder---, shifting a closure or a neutral weak head normal form can
be implemented inexpensively.
Let shifting $\shiftClos n L$ of a closure $L$ be defined by
$\shiftClos n {x_m} = x_{n+m}$ and $\shiftClosp n {\dsub \rho \eta E} =
\dsub {\shiftEnv n \rho} \eta E$.  
It is extended to shifting 
of neutral weak head normal forms $H$ by 
$\shiftNep n {\app H L} = (\shiftNe n H) ~ (\shiftClos n L)$ and
$\shiftNep n {\esub \rho {X_m}} = \esub {\esub {\shiftby n} \rho}
{X_m}$ and $\shiftNe n a = a$.

\subsection{Weak head evaluation}

Our weak head evaluation strategy will postpone
propagation of substitutions into an expression until necessary. 
Treating substitutions lazily seems to be beneficial as also
supported by the experimental analysis on lazy vs eager reduction
strategies for substitutions by Nadathur and his collaborators
\cite{liangNadathurQi:jar05}. We present the algorithm for weak head
normalization in Figure \ref{fig:sub2env}.
We define a function $\twhnf~L$ where $L$ is either a variable $x_n$ or
a proper closure $\dsub \rho \eta E$. The function $\twhnf\/$ is then
defined recursively on $E$.  

\begin{figure}
  \centering
\[
\begin{array}{lll}
\aboxiii{Meta-substitution evaluation $\wmsub \eta \theta$ 
  computes the meta-environment form of $\msub \eta \theta$.} 
\\[0.5em]
  \wmsub {\Shiftby m}{\Shiftby n}    & = & \Shiftby{m+n} \\
  \wmsub {(\eta, M)}{\Shiftby {n+1}} & = & \wmsub \eta {\Shiftby n} \\
  \wmsub {\eta}{(\theta, M)}         & = & (\wmsub \eta \theta, \msub
    \eta M) \\
  \wmsub {\eta}{\msub \theta {\theta'}} & = & \wmsub {(\wmsub \eta
    \theta)} {\theta'}  
\\[0.5em]
\aboxiii{Substitution evaluation $\wsub \rho \eta \sigma$ 
computes the environment form of $\dsub \rho \eta \sigma$.} 
\\[0.5em]
  \wsub {(\shiftEnv k \rho)} \eta \sigma & = & 
    \shiftEnvp k {\wsub \rho \eta \sigma} \\
  \wsub \rho \eta {\shiftby 0} & = & \rho \\  
  \wsub {\shiftby k} \eta {\shiftby n}   & = & \shiftby {k + n}  \\
  \wsub {(\rho, L)} \eta {\shiftby{n+1}} & = & \wsub \rho \eta {\shiftby n} \\
  \wsubp \rho \eta {\sigma, M} & = & (\wsub \rho \eta \sigma,
    \dsub \rho \eta M) \\   
  \wsubp \rho \eta {\esub \sigma \tau} & = & \wsub {(\wsub \rho
    \eta \sigma)} \eta \tau \\  
  \wsubp \rho \eta {\msub \theta \sigma} & = & \wsub \rho {(\wmsub
    \eta \theta)} \sigma 
\\[0.5em]
\aboxiii{Meta-variable lookup $\wmlookup \eta m$ 
  retrieves the binding of $X_m$ in meta-environment $\eta$.}
\\[0.5em]
  \wmlookup  {\Shiftby n} m & = & X_{n+m} \\
  \wmlookupp {\eta, E}    1 & = & E \\
  \wmlookupp {\eta, E}{m+1} & = & \wmlookup \eta m
\\[0.5em]
\aboxiii{Variable lookup $\wlookup \rho m$ 
computes the closure form of $\esub \rho {x_m}$.} 
\\[0.5em]
  \wlookup  {\shiftby n} m & = & x_{n+m} \\
  \wlookupp {\rho, L}  1   & = & L \\
  \wlookupp {\rho, L}             {m+1} & = & \wlookup \rho m \\
  \wlookupp {\esub {\shiftby n} \rho} m & = & \shiftClosp n {\wlookup \rho m} 
\\[0.5em]
\aboxiii{Weak head evaluation $\twhnf~L$ computes the weak head normal
form of closure $L$.}
\\[0.5em]
  \twhnf~x_m         & = & x_m \\
  \whnf  \rho \eta s & = & s \\
  \whnf  \rho \eta a & = & a\\
  \whnf  \rho \eta {x_m} & = & \twhnfp {\wlookup \rho m} \\  
  \whnf  \rho \mId {X_m} & = & \esub \rho {X_m} \\
  \whnf  \rho \eta {X_m} & = & \whnf \rho \mId (\wmlookup \eta m) \\
  \whnfp \rho \eta {\fun A E} & = & \dsubp \rho \eta {\fun A E} \\
  \whnfp \rho \eta {\lambda M} & = & \dsubp \rho \eta {\lambda M} \\
  \whnfp \rho \eta {\app M N}  & = & 
    \wapp{(\whnf \rho \eta M)}{\dsub \rho \eta N}\\
  \whnf \rho \eta {\esub \sigma M} & = & 
    \whnf {\wsub \rho \eta \sigma} \eta M\\
  \whnf \rho \eta {\msub \theta M} & = & \whnf \rho {\wmsub \eta \theta} M
\\[0.5em]
\aboxiii{Evaluating application $\wapp W L$ computes the weak head
  normalform of $\app W L$.}
\\[0.5em]
  \wapp {\dsubp \rho \eta {\lambda M}} L & = & \whnf {\rho, L} \eta
  M \\
  \wapp H L & = & \app H L 
\end{array}
\]

  \caption{Weak head evaluation}
  \label{fig:sub2env}
\end{figure}

To support the lazy evaluation 
of substitutions, our weak head
normalization algorithm relies on the definition of two functions,
namely $\wmsub \eta \theta$ and $\wsub \rho \eta \sigma$. Both
functions are defined recursively over the last argument, i.e., $\twmsub$
is inductively defined over $\theta$ and $\twsub$ 
{is} inductively defined over $\sigma$. When we encounter a closure of $[\sigma]\tau$
(or $\msub{\theta}\theta'$ resp.), we compute first the environment
corresponding to $\sigma$ and subsequently we compute the environment
for $\tau$. This strategy allows us to avoid unnecessary shifting of
de Bruijn indices. 

In addition, $\twhnf$ relies on a $\tlookup$ function to retrieve the $i$-th
element of a substitution which corresponds to the index $i$. Such
lookup functions are defined for both, ordinary variables and
meta-variables. 

Next, we prove that types are preserved when computing weak head
normal forms and that the computation is sound with regard to the
specification of definitional equality.  Note that at this point
termination is only clear for the lookup and substitution evaluation
functions.  For $\twhnf$ and evaluating application $@$, soundness is
conditional on termination. 

\begin{theorem}[Subject reduction] \label{thm:sr}
Let $\Delta \der \eta : \Delta'$.
\begin{enumerate}

\item If $\Delta' \der \theta : \Delta''$ then $\Delta \der 
  \wmsub \eta \theta 
  \equiv 
  \msub \eta \theta 
  : \Delta''$.

\item If $\Delta'; \Psi \der \sigma : \Psi'$ and $\Delta; \Gamma
  \der \rho : \msub \eta \Psi$ 
  then 
  $\Delta; \Gamma \der 
   \wsub \rho \eta \sigma 
   \equiv 
   \dsub \rho \eta \sigma
   : \msub \eta {\Psi'}$. 

\item If  $\Delta'; \Psi \der X_m : A$ 
  then
  $\Delta; \msub \eta \Psi \der \wmlookup \eta m \equiv \msub \eta {X_m}
   : \msub  \eta A$.

\item If $\Delta;\Psi \der x_m : A$ and 
  $\Delta; \Gamma \der \rho : \Psi$
  then $\Delta; \Gamma \der \wlookup \rho m \equiv 
     \esub \rho {x_m} :  \esub \rho A$.

\item Let $\Delta'; \Psi \der E : F$ and 
  $\Delta; \Gamma \der \rho : \msub \eta \Psi$.\\
  If $\whnf \rho \eta E$ is defined then $\Delta; \Gamma \der \whnf
  \rho \eta E \equiv \dsub \rho \eta E : \dsub \rho \eta F$.

\item Let $\Delta; \Gamma \der W : \fun A F$ and $\Delta; \Gamma \der
  L : A$.  If\/ $\wapp W L$ is defined then $\Delta; \Gamma \der \wapp W
  L \equiv \app W L : \sgsub L F$.

\end{enumerate}  
\end{theorem}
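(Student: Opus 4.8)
The plan is to prove all six statements by a single simultaneous induction. Parts~1--4 concern the auxiliary functions $\twmsub$, $\twsub$, $\tLookup$ and $\tlookup$, which are total (their termination is already clear), so for these the induction is structural---on $\theta$, on the pair $(\rho,\sigma)$, and on the lookup index $m$, respectively. Parts~5 and~6 concern $\twhnf$ and the application operation $\wapp{W}{L}$, which are only partial; here the hypothesis ``\dots\ is defined'' furnishes a finite call tree, and I would induct on that tree (equivalently, on the height of the recursive evaluation). Since $\twhnf$ invokes all the other functions and the application operation invokes $\twhnf$ (and vice versa through the application case), the cleanest organization is one well-founded induction whose measure combines the structural sizes of the auxiliary arguments with the height of the $\twhnf$ and application computation. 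In every case the strategy is the same: match on the defining equation of the function, obtain the typing of the immediate subexpressions by Inversion (Theorem~\ref{thm:inv}, and Corollary~\ref{cor:invmeta} for the meta-variable-under-substitution case, with Lemma~\ref{lem:cwf} supplying well-formedness side conditions), apply the induction hypothesis to the recursive calls, and close the equivalence using congruence and the relevant computational law from Figures~\ref{fig:cl1} and~\ref{fig:cl2}, threading types with the conversion rule and Syntactic Validity (Theorem~\ref{thm:synval}) where needed.

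For parts~1--4 each defining clause maps directly onto a computational law. For $\twmsub$ (part~1) the four clauses correspond to the meta-substitution shift/pairing reductions, the pairing-propagation rule, and the composition (category) law; for $\twsub$ (part~2) similarly, with the additional clause $\wsub{\rho}{(\wmsub{\eta}{\theta})}{\sigma}$ discharged by appealing to part~1 and the law pushing a meta-substitution into an ordinary substitution. The lookup clauses (parts~3 and~4) are justified by the variable-lookup equations together with the admissible shifted-lookup equalities $\msub{\Shiftby{n}}{X_m} \equiv X_{n+m}$ and $\esub{\shiftby{n}}{x_m} \equiv x_{n+m}$, derived exactly as in the worked example for $\esub{\sigma,M}{x_{n+1}} \equiv \esub{\sigma}{x_n}$ given in the text. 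The one extra ingredient for part~4 is the clause $\wlookupp{\esub{\shiftby{n}}{\rho}}{m} = \shiftClosp{n}{\wlookup{\rho}{m}}$, which needs the soundness of closure shifting, namely $\shiftClos{n}{L} \equiv \esub{\shiftby{n}}{L}$; this is a short separate induction on $L$ (and on neutrals $H$) using composition.

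For part~5 I would case on $E$. The sort and constant cases are immediate from the constant-propagation laws, and the $\fun{A}{E}$ and $\lambda M$ cases are immediate by reflexivity since $\twhnf$ merely returns the closure. For $E = x_m$ the meta-environment is discarded, which is sound because $\msub{\eta}{x_m} \equiv x_m$, after which part~4 and the inner recursive $\twhnf$ call (induction hypothesis of part~5 on the closure $\wlookup{\rho}{m}$) finish the case. For $E = X_m$ the case $\eta = \mId$ uses the meta-identity law $\msub{\Shiftby{0}}{X_m} \equiv X_m$, while the general case passes through part~3 and again the induction hypothesis. The closure cases $\esub{\sigma}{M}$ and $\msub{\theta}{M}$ use parts~2 and~1 respectively, followed by the composition laws to rebracket $\dsub{(\wsub{\rho}{\eta}{\sigma})}{\eta}{M}$ (resp.\ $\dsub{\rho}{(\wmsub{\eta}{\theta})}{M}$) into $\dsub{\rho}{\eta}{\esub{\sigma}{M}}$ (resp.\ $\dsub{\rho}{\eta}{\msub{\theta}{M}}$). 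The application case $\app{M}{N}$ first applies the induction hypothesis of part~5 to $M$, giving $\whnf{\rho}{\eta}{M} \equiv \dsub{\rho}{\eta}{M}$, and then invokes part~6 on the result; combining with the application-propagation law and congruence yields the claim. Part~6 has only two clauses: the neutral clause $\wapp{H}{L} = \app{H}{L}$ is reflexivity, and the $\lambda$-clause reduces to part~5 on the body $M$ in the extended environment $(\rho,L)$.

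I expect the $\lambda$-clause of part~6 to be the main obstacle: there one must show $\dsub{(\rho,L)}{\eta}{M} \equiv \app{(\dsub{\rho}{\eta}{\lambda M})}{L}$, which requires unfolding the closure over a $\lambda$-abstraction through the meta-substitution-into-$\lambda$ and ordinary-substitution-into-$\lambda$ propagation rules, performing the $\beta$-reduction from Figure~\ref{fig:cl1}, and then simplifying the resulting composed substitution $\esub{(\shiftby{0},L)}{(\esub{\shiftby{1}}{\rho},x_1)}$ back to $(\rho,L)$ via the pairing/shifting reductions and the category laws. The accompanying difficulty throughout is purely bureaucratic: because the equality judgement is typed, every use of a computational law must be matched against the type produced by Inversion, and the intermediate types must be reconciled with the conversion rule, so the real care lies in the typing bookkeeping rather than in the equational reasoning itself.
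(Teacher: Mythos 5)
Your proposal is correct and takes essentially the same approach as the paper, whose proof proceeds ``by induction on the trace of the function and inversion on the typing derivations,'' closing each case with the computational laws of Figures~\ref{fig:cl1} and~\ref{fig:cl2}---exactly your strategy, worked out in more detail. The only (inessential) organizational difference is that the paper proves parts~1--4 each in isolation and only parts~5 and~6 simultaneously, whereas you fold all six into one combined well-founded induction; since parts~1--4 never call into parts~5--6, both decompositions go through.
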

\begin{proof}
  Each by induction on the trace of the function and inversion on the
  typing derivations, the first four statements in isolation and the
  remaining two simultaneously.
\end{proof}

\subsection{Algorithmic equality} 
\label{sec:aleq}

Building on the weak head normalization algorithm introduced in the
previous section, we now give an algorithm for deciding equality of
expressions. This is a key piece in the bi-directional type checking
algorithm which we present in Section \ref{sec:bidir}.  Two
closures, where 
$L = \dsub \rho \eta E$ and $L' = \dsub {\rho'} {\eta'} {E'}$, are
algorithmically equal 
if their weak head normal forms are related, i.e., 
$\whnf \rho \eta E \alw \whnf {\rho'} {\eta'} {E'}$.

As we check
that two expressions are equal, we lazily normalize them using our
weak head normalization algorithm from the previous section and our
algorithmic equality algorithm alternates between applying a whnf step
and actually comparing two expressions or substitutions. 

%


The actual equality algorithm is defined using three mutual recursive
judgments. 1) checking that two expressions in whnf are equal 2)
checking that two neutral weak head normal forms are equal and 3) checking
that two environments, i.e., ordinary substitutions in whnf, are equal.
\[
\begin{array}{ccl}
  W \alw W'  & & \mbox{weak head normal forms $W,W'$ are
    algorithmically equal} \\
  H \aln H'  & & \mbox{neutral weak head normal forms $H,H'$ are algorithmically equal} \\
  \alreq k {k'} {\rho}{\rho'} & & \mbox{environments $\rho,\rho'$ are
    algorithmically equal under shifts by $k,k'$ resp.} \\
\end{array}
\]

Many of the algorithmic equality rules are straightforward and
intuitive, although a bit veiled by the abundance of explicit shifting
that comes with de Bruijn style. 
When checking whether two meta-variables are equal,
we need to make sure that respective environments are equal. 
When we check whether two
lambda-abstractions are equal, we must 
lift their environments under
the lambda-binding.  This amounts to shifting them by one and extending
them with a binding for the first variable.
To handle eta-equality, we eta-expand the neutral weak head normal form
$H$ on the fly when comparing it to a lambda-closure. 

Comparing two environments for equality simply recursively analyzes the
substitutions. In addition, we handle just-in-time
eta-expansion on the level of
substitutions (see the last two rules).

\begin{gather*}
\mbox{Algorithmic equality of neutral weak head normal forms.\hspace{6.5cm}}\\
  \ru{}{a \aln a}
\quad
  \ru{}{x_m \aln x_m}
\quad
  \ru{\alreq 0 0 \rho {\rho'}
    }{\esub \rho {X_m} \aln \esub {\rho'} {X_m}}
\quad
  \ru{H \aln H' \quad 
      \twhnf~ L \alw \twhnf~ L' 
    }{\app H L \aln \app {H'} {L'}}
\\[0.75em]
\mbox{Algorithmic equality of weak head normal forms.\hspace{7.5cm}}\\
  \ru{}{s \alw s}
\qquad
  \ru{\whnf \rho \eta A \alw \whnf {\rho'}{\eta'}{A'}
      \qquad
      \whnf {\lift\rho} \eta B \alw \whnf {\lift{\rho'}} {\eta'}{B'}
    }{\dsubp \rho \eta {\fun A B} \alw 
      \dsubp {\rho'}{\eta'}{\fun {A'}{B'}}}
\\[0.75em]
  \ru{H \aln H'  
    }{H \alw H}
\qquad
  \ru{\whnf {\lift\rho} \eta M \alw \whnf {\lift{\rho'}} {\eta'}{M'}
    }{\dsubp \rho \eta {\lambda M} \alw 
      \dsubp {\rho'}{\eta'}{\lambda M'}}
\end{gather*}
\begin{gather*}
  \ru{\whnf {\lift\rho} \eta M \alw 
      \app{(\shiftNe 1 H)}{x_1}  
    }{\dsubp \rho \eta {\lambda M} \alw H}
\qquad
  \ru{\app{(\shiftNe 1 H)}{x_1}
      \alw \whnf {\lift\rho} \eta M 
    }{H \alw \dsubp \rho \eta {\lambda M}}
\end{gather*}
\begin{gather*}
\mbox{Algorithmic equality of environments.\hspace{10cm}}\\
  \ru{k + n = k' + n'
    }{\alreq k {k'}{\shiftby n}{\shiftby {n'}}}
\qquad
  \ru{\alreq{k+n}{k'}\rho{\rho'}
    }{\alreq k{k'}{\shiftEnv n \rho}{\rho'}}
\qquad
  \ru{\alreq{k}{k'+n'}\rho{\rho'}
    }{\alreq k{k'}{\rho}{\shiftEnv {n'}{\rho'}}}
\\[0.75em]
  \ru{\alreq k{k'}\rho{\rho'} \qquad
      \twhnfp{\shiftClos k L} \alw \twhnfp{\shiftClos {k'}{L'}}
    }{\alreq{k}{k'}{(\rho, L)}{(\rho', L')}}
\\[0.75em]
  \ru{\alreq k {k'} \rho {\shiftby{n'+1}} \qquad
      \twhnfp{\shiftClos k L} \alw x_{k'+n'+1}
    }{\alreq k {k'} {(\rho, L)}{\shiftby {n'}}}
\qquad
  \ru{\alreq k {k'} {\shiftby{n+1}} {\rho'} \qquad
      x_{k+n+1} \alw \twhnfp{\shiftClos {k'} {L'}}
    }{\alreq k {k'} {\shiftby n}{(\rho', L')}}
\end{gather*}

\begin{theorem}[Soundness of algorithmic equality]\label{thm:soundeq} \bla
\begin{enumerate}
\item If $H \aln H'$ and $\Delta; \Gamma \der H : F$ and $\Delta;
  \Gamma \der H' : F'$ then $\Delta; \Gamma \der F \equiv F'$ and
  $\Delta; \Gamma \der H \equiv H' : F$. 
\item If $W \alw W'$ 
  and $\Delta; \Gamma \der W : F$ 
  and $\Delta; \Gamma \der W' : F$ 
  then $\Delta; \Gamma \der W \equiv W' : F$.
\item If $\alreq k {k'} \rho {\rho'}$ 
  and $\Delta;\Gamma \der \shiftEnv k \rho : \Psi$
  and $\Delta;\Gamma \der \shiftEnv {k'}{\rho'} : \Psi$
  then
  $\Delta;\Gamma \der \shiftEnv k \rho \equiv \shiftEnv {k'}{\rho'} : \Psi$.
\end{enumerate}
\end{theorem}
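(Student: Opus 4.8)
The plan is to prove the three statements simultaneously by induction on the derivation of the algorithmic equality judgment ($\aln$, $\alw$, or $\alreq k {k'} \rho {\rho'}$), with a case analysis on the last rule applied. The three workhorses are Subject Reduction (Theorem~\ref{thm:sr}), which lets me replace every weak head normal form by the closure it was computed from, up to definitional equality; Inversion of typing (Theorem~\ref{thm:inv}) together with Corollary~\ref{cor:invmeta}, which expose the shape of the two given typing derivations; and Syntactic Validity (Theorem~\ref{thm:synval}), which recovers well-formedness of the types produced by inversion. Everything is reassembled using the congruence, equivalence (reflexivity, symmetry, transitivity) and type-conversion rules of definitional equality. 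Whenever a premise of an algorithmic rule is itself a $\twhnf$ result, I first use Theorem~\ref{thm:sr} to give both sides a common type, then apply the induction hypothesis, and finally transport the resulting equality back to the original closures by transitivity through the subject-reduction equations.

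For the neutral statement~(1) I case on the rule for $\aln$. The head cases $a \aln a$ and $x_m \aln x_m$ follow by inverting both typings (parts~3 and~5 of Theorem~\ref{thm:inv}): each candidate type is definitionally equal to the one fixed type of the head, so $\Delta; \Gamma \der F \equiv F'$ follows by symmetry and transitivity, while $\Delta; \Gamma \der H \equiv H' : F$ is reflexivity followed by conversion. For $\esub \rho {X_m} \aln \esub {\rho'} {X_m}$ I apply Corollary~\ref{cor:invmeta} to both sides; since the index $m$ is identical, the decomposition of $\Delta$ singles out the same declaration $\cdec \Psi {A'}$, forcing the two environments to share the domain $\msub{\Shiftby m}{\Psi}$ and the two types to satisfy $F \equiv \dsub \rho {\Shiftby m}{A'}$ and $F' \equiv \dsub {\rho'} {\Shiftby m}{A'}$. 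I then feed the premise $\alreq 0 0 \rho {\rho'}$ into the induction hypothesis for~(3), obtaining $\rho \equiv \rho'$, from which both $F \equiv F'$ and the closure equality follow by congruence. The application case $\app H L \aln \app {H'}{L'}$ inverts with part~8 of Theorem~\ref{thm:inv}, applies the induction hypothesis for~(1) to the head (giving equality of the two function types and $H \equiv H'$), uses Theorem~\ref{thm:sr} to replace $L,L'$ by $\twhnf~L,\twhnf~L'$, applies the induction hypothesis for~(2) to the argument premise, and recombines via the application congruence; the result types agree since $\sgsub{L} F_0 \equiv \sgsub{L'}{F_0'}$ follows from the argument equality together with equality of the codomains.

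For the whnf statement~(2) the base rule $s \alw s$ is reflexivity, and the rule $H \alw H$ delegates to~(1). The $\Pi$- and $\lambda$-closure rules are handled by inverting the closure typing (parts~9--10 and~4 of Theorem~\ref{thm:inv}), pushing the suspended $\rho,\eta$ inside $\Pi$ and $\lambda$ via the substitution- and meta-substitution-propagation laws, and invoking the induction hypothesis for~(2) on the component premises after bridging each $\twhnf$ through Theorem~\ref{thm:sr}; for the codomain and body premises the environment is lifted to $\lift \rho$, which matches exactly the lifting appearing in the propagation laws. The two $\eta$-rules, comparing a $\lambda$-closure with a neutral $H$, are the characteristic step: from the common type $F$ I recover a $\Pi$-shape $\fun A B$ by inversion and Syntactic Validity, use the $\eta$-law to rewrite $H$ as $\lambda(\app{(\esub{\shiftby 1}H)}{x_1})$, identify $\shiftNe 1 H$ with $\esub{\shiftby 1}H$ through the shift-propagation equations, and then apply the induction hypothesis for~(2) to the premise $\whnf{\lift\rho}\eta M \alw \app{(\shiftNe 1 H)}{x_1}$. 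Statement~(3) is a routine induction over the environment rules: the pure-shift rule uses $k+n=k'+n'$ and reflexivity; the two $\shiftEnv n \rho$ rules use the shift-composition (category) law $\shiftEnv k {\shiftEnv n \rho} \equiv \shiftEnv {k+n}\rho$; the pairing rule combines the induction hypothesis for~(3) on the tails with~(2) on the head entries (again via Theorem~\ref{thm:sr}); and the last two rules invoke the $\eta$-law for substitutions $\shiftby{n'} \equiv (\shiftby {n'+1}, x_{n'+1})$ before comparing componentwise.

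The main obstacle I expect is in the application case of~(1): extracting $A \equiv A'$ and $F_0 \equiv F_0'$ from the equality of function types $\fun A {F_0} \equiv \fun {A'}{F_0'}$ delivered by the induction hypothesis. This is injectivity of $\Pi$ for definitional equality, which is not among the lemmas established above and must be supplied separately; it holds for LF (for instance via confluence of the underlying reduction, or a type-uniqueness argument for neutral forms), but it is the one genuinely non-routine ingredient. A secondary difficulty is purely administrative: every step must track the two independent suspended substitutions together with the explicit shifts $\shiftClos k L$, $\shiftNe k H$ and $\shiftEnv n \rho$, and matching the shift-propagation definitions against the definitional shift laws in the $\eta$-cases requires care. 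Finally, as the text already observes, termination of $\twhnf$ and of application $@$ is not yet established, so the whole result is a \emph{conditional} soundness: each appeal to Theorem~\ref{thm:sr} is legitimate precisely where the corresponding $\twhnf$ or $@$ call terminates, which is exactly where it occurs inside a given algorithmic-equality derivation.
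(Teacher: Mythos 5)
Your proposal matches the paper's proof exactly in strategy: the paper's entire argument is the one-liner ``Simultaneously by induction on the derivation of algorithmic equality and inversion on the typing,'' which is precisely your simultaneous induction over $\aln$, $\alw$, and $\alreq{k}{k'}{\rho}{\rho'}$ with Theorems~\ref{thm:sr}, \ref{thm:inv}, \ref{thm:synval} and Corollary~\ref{cor:invmeta} discharging the individual cases. Your elaboration is sound, and the $\Pi$-injectivity issue you flag in the application case is a genuine ingredient that the paper's terse proof silently glosses over, so your write-up is if anything more complete than the original.
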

\begin{proof}
  Simultaneously by induction on the derivation of algorithmic
  equality and inversion on the typing.
\end{proof}

\section{Bidirectional Type Checking} 
\label{sec:bidir}
In this section, we show how to use our explicit substitution calculus
to type-check expressions. As mentioned in the introduction,
accumulating substitution walks in type-checking is one of the key
applications of this work. We only describe the algorithm and leave
its theoretical properties for future work.

We design the algorithm in a bidirectional way
\cite{coquand:type,abelCoquand:lfsigma} which allows us to omit
type annotations at lambda-abstractions. We use the following three judgments:

\[
\begin{array}{ll}
  \Delta; \Gamma \der V \jchk s & \mbox{Type normal form $V$ checks against
    sort $s$} \\
  \Delta; \Gamma \der V \jchk L & \mbox{Normal form $V$ checks against
    ``type'' closure $L$} \\
  \Delta; \Gamma \der U \jinf L & \mbox{The type of neutral normal
    form $U$ is inferred as closure $L$} 
\\[0.75em]
  \Delta; \Gamma \der \nu \jchk \Psi & \mbox{Normal substitution
    $\nu$ checks against domain $\Psi$} 
\end{array}
\]
In these judgements, $\Gamma$ is a list of type closures $L$.  On
$\Delta$ we pose no restrictions; an entry $\cdec \Psi A$ of $\Delta$
is as before a list of type expressions $\Psi$ and a type expression $A$.
 

\noindent
Inferring the type of neutral normal forms $U$.
\begin{gather*}
  \ru{}{\Delta; \Gamma \der a \jinf \dsub \sid \mId \Sigma(a)}
\qquad 
  \ru{\Delta; \Gamma \der U \jinf L \qquad
      \twhnf~L = \dsubp \rho \eta {\fun A B} \qquad
      \Delta; \Gamma \der V \jchk \dsub \rho \eta A
    }{\Delta; \Gamma \der \app U V \jinf \dsub {\rho, V} \eta B}
\\[0.75em]
  \ru{\length{\Gamma'} = n
    }{\Delta; \Gamma, L, \Gamma' \der x_{n+1} \jinf 
      \shiftClos {n+1} {L} 
      }
\qquad 
  \ru{\Delta = \Delta_1,\cdec \Psi A,\Delta_2 \qquad
      \length{\Delta_2} = n \qquad
      \Delta; \Gamma \der \nu \jchk \dsub{\sid}{\Shiftby{n+1}}\Psi
    }{\Delta; \Gamma \der \esub \nu {X_{n+1}} \jinf 
        \dsub \nu {\Shiftby {n+1}} A} 
\end{gather*}
Checking the type of normal forms $V$.
\begin{gather*}
  \ru{
      \twhnf~L = \dsubp \rho \eta {\fun A B} \quad
      \Delta; \Gamma, \dsub \rho \eta A  \der V \jchk
         \dsub {\shiftby 1 \rho, x_1} \eta B
    }{\Delta; \Gamma \der \lambda V \jchk L}
\quad
  \ru{\Delta; \Gamma \der U \jinf L \quad
      \twhnf~L \alw \twhnf~L'
    }{\Delta; \Gamma \der U \jchk L'}
\end{gather*}
Checking well-formedness of types and kinds $V$.
\begin{gather*}
   \ru{
     }{\Delta; \Gamma \der \lftype \jchk \lfkind}
\quad
  \ru{\Delta; \Gamma \der V \jchk \lftype \quad
      \Delta; \Gamma,  \dsub \sid \mId V \der V' \jchk s
    }{\Delta; \Gamma \der \fun V V' \jchk s} 
\quad
  \ru{\Delta; \Gamma \der U \jinf L \quad
      \twhnf~L = \lftype
    }{\Delta; \Gamma \der U \jchk \lftype}
\end{gather*}
Checking normal substitutions $\nu$.  In this judgement $\Delta;
\Gamma \der \nu \jchk \Psi$, the context $\Psi$ is also in closure form. 
\begin{gather*}
  \ru{\length\Gamma = n
    }{\Delta; \Gamma \der \shiftby n \jchk \cempty}
\qquad
  \ru{\Delta; \Gamma \der \nu \jchk \Psi \qquad
      \Delta; \Gamma \der V \jchk L
    }{\Delta; \Gamma \der (\nu, V) \jchk \Psi,L}
\end{gather*}

\section{Conclusion}
We have presented an explicit substitution calculus together with
algorithms for weak head normalization, definitional equality, and
bi-directional type checking where both ordinary variables and
meta-variables are modelled using de Bruijn indices and both kinds of
substitutions are handled lazily and simultaneously. 

We also have proven subject reduction and soundness of the
definitional equality algorithm.\LONGVERSION{
A sketch of the normalization proof,
which guarantees that the described algorithm is complete, can be
found in the appendix.}
Finally, we describe a bi-directional
type-checking algorithm which treats ordinary substitutions and
meta-substitutions at the same time. In the future, we plan to 
\SHORTVERSION{prove completeness of algorithmic equality and type
  checking and to }adapt
the presented explicit substitutions in the implementation of the
programming and reasoning environment Beluga.

\bibliographystyle{eptcs}

\begin{thebibliography}{10}
\providecommand{\bibitemstart}[1]{\bibitem{#1}}
\providecommand{\bibitemend}{}
\providecommand{\bibliographystart}{}
\providecommand{\bibliographyend}{}
\providecommand{\url}[1]{\texttt{#1}}
\providecommand{\urlprefix}{Available at }
\providecommand{\bibinfo}[2]{#2}
\bibliographystart

\bibitemstart{abadiCardelliCurienLevy:jfp91}
\bibinfo{author}{Mart\'{\i}n Abadi}, \bibinfo{author}{Luca Cardelli},
  \bibinfo{author}{Pierre-Louis Curien} \& \bibinfo{author}{Jean-Jacques
  L{\'e}vy} (\bibinfo{year}{1991}): \emph{\bibinfo{title}{Explicit
  Substitutions}}.
\newblock {\sl \bibinfo{journal}{Journal of Functional Programming}}
  \bibinfo{volume}{1}(\bibinfo{number}{4}), pp. \bibinfo{pages}{375--416}.
\bibitemend

\bibitemstart{abelCoquand:lfsigma}
\bibinfo{author}{Andreas Abel} \& \bibinfo{author}{Thierry Coquand}
  (\bibinfo{year}{2007}): \emph{\bibinfo{title}{Untyped Algorithmic Equality
  for {Martin-L\"of's} Logical Framework with Surjective Pairs}}.
\newblock {\sl \bibinfo{journal}{Fundamenta Informaticae}}
  \bibinfo{volume}{77}(\bibinfo{number}{4}), pp. \bibinfo{pages}{345--395}.
\newblock \bibinfo{note}{{TLCA'05} special issue.}
\bibitemend

\bibitemstart{adams:PhD}
\bibinfo{author}{Robin Adams} (\bibinfo{year}{2005}): \emph{\bibinfo{title}{A
  Modular Hierarchy of Logical Frameworks}}.
\newblock \bibinfo{type}{Ph.D. thesis}, \bibinfo{school}{University of
  Manchester}.
\bibitemend

\bibitemstart{boveDybjerNorell:tphols09}
\bibinfo{author}{Ana Bove}, \bibinfo{author}{Peter Dybjer} \&
  \bibinfo{author}{Ulf Norell} (\bibinfo{year}{2009}): \emph{\bibinfo{title}{A
  Brief Overview of {Agda} - A Functional Language with Dependent Types}}.
\newblock In: \bibinfo{editor}{Stefan Berghofer}, \bibinfo{editor}{Tobias
  Nipkow}, \bibinfo{editor}{Christian Urban} \& \bibinfo{editor}{Makarius
  Wenzel}, editors: {\sl \bibinfo{booktitle}{22nd International Conference on
  Theorem Proving in Higher Order Logics (TPHOLs'09)}}, {\sl
  \bibinfo{series}{Lecture Notes in Computer Science}} \bibinfo{volume}{5674},
  \bibinfo{publisher}{Springer-Verlag}, pp. \bibinfo{pages}{73--78}.
\newblock \urlprefix\url{http://dx.doi.org/10.1007/978-3-642-03359-9_6}.
\bibitemend

\bibitemstart{coquand:conversion}
\bibinfo{author}{Thierry Coquand} (\bibinfo{year}{1991}):
  \emph{\bibinfo{title}{An Algorithm for Testing Conversion in Type Theory}}.
\newblock In: \bibinfo{editor}{G.~Huet} \& \bibinfo{editor}{G.~Plotkin},
  editors: {\sl \bibinfo{booktitle}{Logical Frameworks}},
  \bibinfo{publisher}{Cambridge University Press}, pp.
  \bibinfo{pages}{255--279}.
\bibitemend

\bibitemstart{coquand:type}
\bibinfo{author}{Thierry Coquand} (\bibinfo{year}{1996}):
  \emph{\bibinfo{title}{An Algorithm for Type-Checking Dependent Types}}.
\newblock In: {\sl \bibinfo{booktitle}{Mathematics of Program Construction.
  Selected Papers from the Third International Conference on the Mathematics of
  Program Construction (July 17--21, 1995, Kloster Irsee, Germany)}}, {\sl
  \bibinfo{series}{Science of Computer Programming}}~\bibinfo{volume}{26},
  \bibinfo{publisher}{Elsevier}, pp. \bibinfo{pages}{167--177}.
\bibitemend

\bibitemstart{dowekHardinKirchner:infcomp00}
\bibinfo{author}{Gilles Dowek}, \bibinfo{author}{Th{\'e}r{\`e}se Hardin} \&
  \bibinfo{author}{Claude Kirchner} (\bibinfo{year}{2000}):
  \emph{\bibinfo{title}{Higher Order Unification via Explicit Substitutions}}.
\newblock {\sl \bibinfo{journal}{Information and Computation}}
  \bibinfo{volume}{157}(\bibinfo{number}{1-2}), pp. \bibinfo{pages}{183--235}.
\bibitemend

\bibitemstart{harperPfenning:equivalenceLF}
\bibinfo{author}{Robert Harper} \& \bibinfo{author}{Frank Pfenning}
  (\bibinfo{year}{2005}): \emph{\bibinfo{title}{On Equivalence and Canonical
  Forms in the {LF} Type Theory}}.
\newblock {\sl \bibinfo{journal}{ACM Transactions on Computational Logic}}
  \bibinfo{volume}{6}(\bibinfo{number}{1}), pp. \bibinfo{pages}{61--101}.
\bibitemend

\bibitemstart{liangNadathurQi:jar05}
\bibinfo{author}{Chuck Liang}, \bibinfo{author}{Gopalan Nadathur} \&
  \bibinfo{author}{Xiaochu Qi} (\bibinfo{year}{2005}):
  \emph{\bibinfo{title}{Choices in representation and reduction strategies for
  lambda terms in intensional contexts}}.
\newblock {\sl \bibinfo{journal}{Journal of Automated Reasoning}}
  \bibinfo{volume}{33}(\bibinfo{number}{2}), pp. \bibinfo{pages}{89--132}.
\bibitemend

\bibitemstart{Nadathur:TCS98}
\bibinfo{author}{Gopalan Nadathur} \& \bibinfo{author}{Debra~Sue Wilson}
  (\bibinfo{year}{1998}): \emph{\bibinfo{title}{A Notation for Lambda Terms: A
  Generalization of Environments}}.
\newblock {\sl \bibinfo{journal}{Theoretical Computer Science}}
  \bibinfo{volume}{198}(\bibinfo{number}{1-2}), pp. \bibinfo{pages}{49--98}.
\newblock \urlprefix\url{http://dx.doi.org/10.1016/S0304-3975(97)00184-9}.
\bibitemend

\bibitemstart{Nanevski:ICML05}
\bibinfo{author}{Aleksandar Nanevski}, \bibinfo{author}{Frank Pfenning} \&
  \bibinfo{author}{Brigitte Pientka} (\bibinfo{year}{2008}):
  \emph{\bibinfo{title}{Contextual modal type theory}}.
\newblock {\sl \bibinfo{journal}{ACM Transactions on Computational Logic}}
  \bibinfo{volume}{9}(\bibinfo{number}{3}), pp. \bibinfo{pages}{1--49}.
\bibitemend

\bibitemstart{norell:PhD}
\bibinfo{author}{Ulf Norell} (\bibinfo{year}{2007}):
  \emph{\bibinfo{title}{Towards a practical programming language based on
  dependent type theory}}.
\newblock \bibinfo{type}{Ph.D. thesis}, \bibinfo{school}{Department of Computer
  Science and Engineering, Chalmers University of Technology},
  \bibinfo{address}{G\"{o}teborg, Sweden}.
\bibitemend

\bibitemstart{Pfenning99cade}
\bibinfo{author}{Frank Pfenning} \& \bibinfo{author}{Carsten Sch{\"u}rmann}
  (\bibinfo{year}{1999}): \emph{\bibinfo{title}{System Description: {Twelf} ---
  A Meta-Logical Framework for Deductive Systems}}.
\newblock In: \bibinfo{editor}{H.~Ganzinger}, editor: {\sl
  \bibinfo{booktitle}{16th International Conference on
  Automated Deduction (CADE-16)}}, {\sl \bibinfo{series}{Lecture Notes in
  Artificial Intelligence}} \bibinfo{volume}{1632},
  \bibinfo{publisher}{Springer}, pp. \bibinfo{pages}{202--206}.
\bibitemend

\bibitemstart{Pientka03phd}
\bibinfo{author}{Brigitte Pientka} (\bibinfo{year}{2003}):
  \emph{\bibinfo{title}{Tabled higher-order logic programming}}.
\newblock \bibinfo{type}{Ph.D. thesis}, \bibinfo{school}{Department of Computer
  Science, Carnegie Mellon University}.
\newblock \bibinfo{note}{CMU-CS-03-185}.
\bibitemend

\bibitemstart{Pientka:PPDP08}
\bibinfo{author}{Brigitte Pientka} \& \bibinfo{author}{Joshua Dunfield}
  (\bibinfo{year}{2008}): \emph{\bibinfo{title}{Programming with proofs and
  explicit contexts}}.
\newblock In: {\sl \bibinfo{booktitle}{ACM SIGPLAN Symposium on Principles and
  Practice of Declarative Programming (PPDP'08)}}, \bibinfo{publisher}{ACM
  Press}, pp. \bibinfo{pages}{163--173}.
\bibitemend

\bibitemstart{pientkaDunfield:ijcar10}
\bibinfo{author}{Brigitte Pientka} \& \bibinfo{author}{Joshua Dunfield}
  (\bibinfo{year}{2010}): \emph{\bibinfo{title}{Beluga: {A} Framework for
  Programming and Reasoning with Deductive Systems (System Description)}}.
\newblock In: \bibinfo{editor}{J{\"u}rgen Giesl} \& \bibinfo{editor}{Reiner
  H{\"a}hnle}, editors: {\sl \bibinfo{booktitle}{5th International Joint
  Conference on Automated Reasoning (IJCAR'10)}}, \bibinfo{series}{Lecture
  Notes in Computer Science}, \bibinfo{publisher}{Springer-Verlag}.
\bibitemend

\bibitemstart{poswolskySchuermann:delphin}
\bibinfo{author}{Adam Poswolsky} \& \bibinfo{author}{Carsten Sch{\"u}rmann}
  (\bibinfo{year}{2009}): \emph{\bibinfo{title}{System Description: Delphin - A
  Functional Programming Language for Deductive Systems}}.
\newblock {\sl \bibinfo{journal}{Electronic Notes in Theoretical Computer
  Science}} \bibinfo{volume}{228}, pp. \bibinfo{pages}{113--120}.
\newblock \urlprefix\url{http://dx.doi.org/10.1016/j.entcs.2008.12.120}.
\bibitemend

\bibitemstart{Schuermann:ESOP08}
\bibinfo{author}{Adam Poswolsky} \& \bibinfo{author}{Carsten Sch{\"u}rmann}
  (\bibinfo{year}{2008}): \emph{\bibinfo{title}{Practical programming with
  higher-order encodings and dependent types}}.
\newblock In: {\sl \bibinfo{booktitle}{17th European
  Symposium on Programming (ESOP '08)}}, {\sl \bibinfo{series}{Lecture Notes in
  Computer Science}} \bibinfo{volume}{4960}, \bibinfo{publisher}{Springer},
  p.~\bibinfo{pages}{93}.
\bibitemend

\bibliographyend
\end{thebibliography}

\LONGVERSION{

\appendix
 
\input{normalization}




}

\end{document}